\newcommand{\bigs}{{\mathcal S}}
\newcommand{\opt}{\text{OPT}}
\renewcommand{\P}{\text{P}}
\newcommand{\NP}{\text{NP}}
\newcommand{\DTIME}{\text{DTIME}}
\newcommand{\ZPTIME}{\text{ZPTIME}}
\renewcommand{\phi}{\varphi}
\renewcommand{\epsilon}{\varepsilon}
\title{Set Covering Problems with General Objective Functions}
\author{Jean Cardinal, Christophe Dumeunier}
\institute{Universit\'e Libre de Bruxelles (ULB)\\ 
Computer Science Department, CP 212\\
B-1050 Brussels, Belgium\\
{\tt \{jcardin,cdumeuni\}@ulb.ac.be}}
\begin{document}
\maketitle

\sloppy

\begin{abstract}
We introduce a parameterized version of set cover that generalizes several previously studied problems. 
Given a ground set $V$ and a collection of subsets $S_i$ of $V$, a feasible solution is a partition of $V$ such that each subset of the partition is included in one of the $S_i$. The problem involves maximizing the mean subset size of the partition, where the mean is the generalized mean of parameter $p$, taken over the elements. For $p=-1$, the problem is equivalent to the classical minimum set cover problem. For $p=0$, it is equivalent to the minimum entropy set cover problem, introduced by Halperin and Karp. For $p=1$, the problem includes the maximum-edge clique partition problem as a special case. We prove that the greedy algorithm simultaneously approximates the problem within a factor of $(p+1)^{\frac 1p}$ for any $p\in{\mathbb R}^+$, and that this is the best possible unless $\P = \NP$. These results both generalize and simplify previous results for special cases. We also consider the corresponding graph coloring problem, and prove several tractability and inapproximability results. Finally, we consider a further generalization of the set cover problem in which we aim at minimizing the sum of some concave function of the part sizes. As an application, we derive an approximation ratio for a Rent-or-Buy set cover problem.
\end{abstract}

\section{Introduction}

The greedy strategy is one of the simplest and most well-known heuristic, which can be applied to many combinatorial optimization problems. In the case of the minimum set cover problem, it involves iteratively choosing a subset that covers a maximum number of uncovered elements. We study this algorithm on a natural family of set covering problems in which the value of a subset depends on the number of elements it covers, and a parameter $p$ encodes the way in which these values are combined. This parameter interpolates between different versions of the set covering problem, in particular between the classical minimum set cover problem, the minimum entropy set cover problem, and the simpler problem of finding a subset of maximum size. 

Intuitively, the greedy algorithm should perform better for objective functions in which more importance is given to subsets covering many elements. We give a formal support to this intuition by showing that the greedy algorithm provides a constant factor approximation for all positive values of the parameter $p$. We further show that this is the best we can achieve unless $\P =\NP$.\\ 

We first define some notations. Let $V$ be an $n$-element ground set and $\bigs  =
\{S_1,\dots,S_k\}$ a collection of $k$ subsets of $V$, whose union is $V$. 
In the minimum set cover problem, we seek a minimum size 
subset $\mathcal{T}\subseteq\bigs $ such that $\bigcup_{S_i\in\mathcal{T}} S_i = V$.  
We define a {\em cover} as an assignment $\phi : V \mapsto \bigs $ 
of each element of $V$ to a set of $\bigs $ such that $v \in \phi (v)$ 
for all $v \in V$. This definition allows us to define alternative objective functions for 
the set cover problem. Given a cover $\phi$, let us define
a {\em part} as a set $\phi^{-1}(S_i)$ for some $S_i\in\bigs$. 
We use the following two notations: $c_i:=|\phi^{-1}(S_i)|$ is the {\em part size} of the $i$th subset $S_i$
with respect to $\phi$, and $a_v:=|\phi^{-1}(\phi (v))|$ is the size of the part containing the element $v$, with $v\in V$.\\

We define a new family of set cover problems in which we aim at maximizing the {\em mean}
$M (\{ a_v : v\in V\} )$ of the values $a_v$.
There exist many definitions of the mean $M (\{ a_1,a_2,\ldots ,a_n\} )$ of a set of numbers. 
The most widely used definition is the {\em arithmetic mean}: $M_1 (\{ a_1,a_2,\ldots ,a_n\} ) := \frac 1n \sum_{i=1}^n a_i$.
Another well-known definition is the {\em geometric mean}:
$M_0 (\{ a_1,a_2,\ldots ,a_n\} ) := (a_1\cdot a_2\cdot \ldots\cdot  a_n)^{\frac1n}$.
Finally, we also consider the {\em harmonic mean}:
$M_{-1} (\{ a_1,a_2,\ldots ,a_n\} ) := n /\left( \sum_{i=1}^{n} a_i^{-1} \right)$.
The arithmetic, geometric, and harmonic means are special cases of the {\em generalized} mean:
\begin{equation}
\label{eq:genmean}
M_p (\{ a_1,a_2,\ldots ,a_n\} ) = \left( \frac 1n \sum_{v\in V} a_v^p \right)^{\frac 1p} = \left( \frac 1n \sum_{i:c_i\not= 0} c_i^{p+1} \right)^{\frac 1p}.
\end{equation}
This value is the arithmetic mean for $p=1$, and the harmonic mean for $p=-1$. It is well-known that the limit of the generalized mean for 
$p\to 0$ is equal to the geometric mean. The generalized mean with parameter $p$ is also called the normalized 
$L_p$-norm\footnote{We use the word $p$-mean here, in order to avoid confusion with the ``minimum $L_p$-norm set cover" problem~\cite{GGKT07}.}.

\begin{definition}[Maximum $p$-mean set cover]
Given an $n$-element ground set $V$ and a collection $\bigs  =\{S_1,\dots,S_k\}$ 
of subsets of $V$ whose union is $V$, find a cover $\phi: V \mapsto \bigs $ that maximizes $M_p (\{ a_v : v\in V\} )$,
where $a_v:=|\phi^{-1}(\phi (v))|$, and $M_p$ is the generalized mean of parameter $p$.
\end{definition}

\subsection*{Special Cases}

Interestingly, letting $p=-1$ (harmonic mean) or $p=0$ (geometric mean) yields set cover problems that are already known: the harmonic mean version is the minimum set cover problem, while the geometric mean version is the {\em minimum entropy set cover} problem~\cite{CFJ06}. A special case of the maximum $p$-mean set cover problem for $p=1$ has recently been introduced in the form of a graph coloring problem~\cite{DJLLP07}.

\paragraph{Minimum Set Cover.} 
The maximum harmonic mean set cover problem can be cast as $\min_{\phi} \sum_{v\in V} \frac1{a_v}$. We can rewrite this objective function as
$\sum_{v\in V} \frac1{a_v} =  \sum_{S_i\in\bigs } \sum_{v\in \phi^{-1}(S_i)} \frac 1{c_i} = |\{ S_i :  c_i\not= 0\} |$.
Hence the maximum harmonic mean set cover problem is the standard minimum set cover problem.

This problem is among the most studied $\NP$-hard problems. It has long been known to be approximable within a factor $H_{\max_i |S_i|}$ with the greedy algorithm. The first proof is from Johnson~\cite{J74}. Lov\'asz~\cite{L75} obtained the same factor with a different method. Later, Chv\'atal extended the result to the weighted set cover problem~\cite{C79}, in which the subsets $S_i$ have nonuniform costs. A number of papers show that the logarithmic approximation guarantee is likely to be optimal. Lund and Yannakakis~\cite{LY94} first proved that the problem is not approximable within $\log n / 4$ unless $\NP\subseteq\DTIME(n^{\text{polylog}(n)})$. This result has been improved to $(1-o(1))\ln n$ by Feige~\cite{F98}, under the hypothesis $\NP\not\subseteq\DTIME(n^{O(\log \log n)})$. Raz and Safra~\cite{RS97}, and Alon, Moshkovitz, and Safra~\cite{AMS06} proved inapproximability results for factors of the form $c\ln n$ for some constant $c$ under the hypothesis $\P\not=\NP$. These results are consequences of new PCP characterizations of $\NP$.

\paragraph{Minimum Entropy Set Cover.}
Let us now consider the geometric mean version: $\max_{\phi} \left( \prod_{v\in V}a_v\right)^{\frac1n}$.
We relate this mean to the {\em entropy} of the discrete probability distribution found by dividing each part size by $n$:
\begin{eqnarray*}
- \sum_{i=1}^k  \frac{c_i}n \log \frac{c_i}n & = & - \sum_{v\in V} \frac1n \log\frac{a_v}n \\
& = & \log n - \frac1n \sum_{v\in V} \log a_v \\
& = & \log n-\log M_0 (\{ a_v : v\in V\} ). \label{eqn:entgeom}
\end{eqnarray*}
Thus the maximum geometric mean set cover problem is equivalent to the problem of minimizing the entropy of the partition.
This problem is known as the {\em minimum entropy set cover} problem. It has been introduced by Halperin and Karp~\cite{HK05},
and has applications in the field of computational biology. They proved that the problem was approximable within a constant additive term with the greedy algorithm. Improving on this work, Cardinal, Fiorini, and Joret~\cite{CFJ06} provided a simple analysis showing that the constant was at most $\log_2 e\simeq 1.4427$ bits, and that this was the smallest additive error achievable in polynomial time, unless $\P =\NP$. The minimum entropy vertex cover~\cite{CFJ08} and minimum entropy graph coloring~\cite{CFJ05} problems, which are special cases of minimum entropy set cover, have been studied by the same authors.

\paragraph{Maximum-Edge Clique Partition.}
In a recent publication~\cite{DJLLP07}, Dessmark, Jansson, Lingas, Lundell, and Persson studied the {\em maximum-edge clique partition} (Max-ECP) problem. In this problem, we aim to partition a graph $G$ into cliques in order to maximize the number of edges whose endpoints are in the same clique of the partition. This is an implicit set cover problem, in which the subsets $S_i$ are the cliques of $G$, and the function to maximize is: 
\begin{equation*}
\sum_{i=1}^k {c_i\choose 2} = \frac 12 \left(-n+\sum_{i=1}^k c_i^2\right) = \frac n2 \left( M_1(\{ a_v : v\in V\} ) - 1 \right).
\end{equation*}
Thus the problem can be seen as an implicit maximum $p$-mean set cover problem for $p=1$. They show that the problem is $2$-approximable on perfect graphs using the greedy algorithm, and that it is not approximable within a factor $n^{1-O(1/(\log n)^{\gamma})}$ for some constant $\gamma$ in polynomial time unless $\NP\subseteq \ZPTIME (2^{(\log n)^{O(1)}})$. 

\paragraph{Max-Max and Max-Min Set Cover.}
When $p\to\infty$, the maximum $p$-mean set cover problem involves finding a cover in which the largest part has maximum size. This is a trivial problem, unless the subsets in $\bigs$ are not given explicitly, like in the graph coloring problem. For $p\to -\infty$, the problem is that of maximizing the size of the smallest part, thus solving $\max_{\phi} \min_{v\in V} a_v = \max_{\phi} \min_{i:c_i\not= 0} c_i$. This problem seems much more challenging. We will refer to it as the max-min set cover problem.

\subsection*{Our results}

We show in section~\ref{sec:approx} that for any $p\in{\mathbb R}^+$, the maximum $p$-mean set cover problem is approximable within a factor of $(p+1)^{1/p }$. This factor is less than $e$ for all positive values of $p$, hence this can be seen as a {\em robust} $e$-approximation for all $p$-means with positive $p$. This result generalizes the approximability results of Cardinal et al.~\cite{CFJ06} for the case $p\to 0$, and of Dessmark et al.~\cite{DJLLP07} for $p=1$. We also prove that this is the best we can achieve in polynomial time unless $\P = \NP$, using a powerful reduction due to Feige et al.~\cite{F98,FLT04}. When $p$ is negative, we show that the performance of the greedy algorithm degrades. We give an inapproximability result for max-min set cover.

Graph coloring problems can be seen as implicit set cover problems in which the subsets $S_i$ are the maximal independent sets of the graph. 
The subsets are not given explicitly, which would cause an exponential blowup in the problem size, but rather implicitly, from the graph structure.
We define the maximum $p$-mean graph coloring problem in this natural way. Special cases of the maximum $p$-mean graph coloring problem include the standard minimum coloring problem ($p=-1$), the minimum entropy coloring problem~\cite{CFJ05} ($p\to 0$), the maximum-edge clique partition problem~\cite{DJLLP07} ($p=1$), and the maximum independent set problem ($p\to +\infty$). In Section~\ref{sec:coloring} we give approximability and inapproximability results for this problem. 

The maximum $p$-mean set cover problem involves maximizing the sum of the $(p+1)$th powers of the part sizes, as can be seen in equation~(\ref{eq:genmean}). In section~\ref{sec:gen}, we consider weighted instances, and a further generalization of the set cover problem, in which we aim at minimizing the sum of some concave function of the part sizes. We give a closed form of the approximation ratio achieved by the greedy algorithm for this general class of problems, and apply this result to the case of the Rent-or-Buy set cover problem~\cite{FHN07b}.

\subsection*{Related works}

\paragraph{Minimum sum set cover.}
In the minimum sum set cover problem we aim to find an ordering of the subsets that minimizes the average
{\em cover time} of an element of the ground set, where the cover time of an element is the index of the first subset covering it.
This problem was first considered in its graph coloring version~\cite{BBHST98}.
Feige, Lov\'asz, and Tetali~\cite{FLT04} gave an elegant proof of the fact that greedy is a 4-approximation algorithm, and 
that this was the best one could hope for unless $\P =\NP$. They also studied the related minimum sum vertex cover problem, for which
they provided a 2-approximation algorithm.

\paragraph{Generalizations of minimum sum set cover.}
Munagala, Babu, Motwani, and Widom~\cite{MBMW05} introduced the {\em pipelined set cover} problem.
In this problem, we aim to find an ordering of the subsets in $\bigs$ that minimizes the $L_p$-norm of the vector
$(R_i)$, where $R_i$ is the number of elements that are not contained in
any of the first $(i-1)$ subsets. For $p=1$, the problem is equivalent to the minimum sum set cover problem.
They generalize the technique of Feige et al.~\cite{FLT04} to prove a $4^{\frac 1p}$-approximation.

More recently, Golovin, Gupta, Kumar, and Tangwongsan~\cite{GGKT07} considered another minimum
$L_p$-norm set cover problem. This variant involves finding an ordering of the subsets minimizing the $L_p$-norm
of the cover time vector. This problem is a simultaneous generalization of the minimum set cover problem and the 
minimum sum set cover problem. They prove that the greedy algorithm provides a $O(p)$-approximate solution, 
and that this is the best possible, up to a constant factor, unless $\NP\subseteq \DTIME (n^{O(\log\log n)})$.

\paragraph{Graph Coloring.}
The greedy algorithm for set cover translates to the MaxIS algorithm for graph coloring, in which a maximum independent set is iteratively chosen as new color class. This algorithm has in particular been analyzed for the minimum sum~\cite{BBHST98} and minimum entropy~\cite{CFJ05,CFJ06} graph coloring problems.

Recently, Fukunaga, Halld\'orsson, and Nagamochi~\cite{FHN08} initiated the study of a very general family of minimum cost graph coloring problems,
similar to what we propose in section~\ref{sec:gen}. They proved that any minimum cost graph coloring problem in this family is 4-approximable on weighted interval graphs, provided that the cost function is both monotone and concave. The proposed algorithm iteratively removes a maximum $i$-colorable subgraph, where $i$ is doubled at each iteration. 

In another recent contribution, Fukunaga, Halld\'orsson, and Nagamochi~\cite{FHN07b} introduced the {\em Rent-or-Buy} coloring problem in vertex-weighted graphs, in which the cost of a color class is the minimum between 1 and the total weight of the class. This models situations in which each color class has to be paid for either by ``buying" it for a fixed cost, or ``renting" it for a price proportional to its size. They gave, among other results, a 2-approximation for this problem in perfect graphs. We consider the set cover version of this problem in section~\ref{sec:gen}.

\paragraph{Clique Partitioning with Value-polymatroidal Costs.}
Gijswijt, Jost, and Queyranne~\cite{GJQ07} recently studied clique partitioning problems with {\em value-polymatroidal} cost functions.
A function $f$ over the subsets of $V$ is said to be value-polymatroidal whenever $f (\emptyset ) = 0$, $f$ is non-decreasing and for every subsets 
$S$ and $T$ with $f(S)\geq f(T)$, and every $u$ in $V \setminus (T\cup S)$, the inequality %$f(S+u) + f(T) \leq f(S) + f(T+u)$ 
$f(S+u) - f(S) \leq f(T+u) - f(T)$ holds. They define the cost of a clique partition as the sum of the cost of each clique.
They prove, among other results, that this problem is solvable in polynomial time on interval graphs.

\paragraph{Minimum $L_p$-norm problems.}
Azar, Epstein, Richter, and Woeginger~\cite{AzarERW02} studied approximation algorithms for a scheduling problem in which we aim to {\em minimize} the $L_p$-norm of the part sizes. A similar problem has been studied by Azar and Taub~\cite{AzarT04}, who proposed all-norm approximation algorithms. Although similar in spirit, the goal is different than ours, since we instead seek the most ``nonuniform" distribution, with {\em maximum} $L_p$-norm.\\

A number of other problems with general cost functions have been studied, such as facility location~\cite{HMM03}. Due to space constraints, we do not give more details here.

\section{Approximability}
\label{sec:approx}

\begin{lemma}
\label{lem:approx}
The maximum $p$-mean set cover problem for $p\in{\mathbb R}$ is approximable in polynomial time within a factor of 
\begin{equation}
\label{eq:ratio}
\left(\frac{n^{p+1}}{\sum_{j=1}^n j^p}\right)^{\frac 1p}.
\end{equation}
\end{lemma}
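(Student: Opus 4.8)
The plan is to analyze the greedy algorithm, which repeatedly selects a subset of $\bigs$ covering a maximum number of still-uncovered elements, and to compare its value to that of an optimal cover $\phi^*$ by a charging argument. Using the right-hand form of~(\ref{eq:genmean}), for $p>0$ maximizing $M_p$ is the same as maximizing $\sum_{v\in V}a_v^p=\sum_i c_i^{p+1}$, since $M_p$ is an increasing function of this sum. Write $\greedy:=\sum_{v\in V}(a_v)^p$ for the greedy cover and $\opt$ for the same sum under $\phi^*$, whose parts have sizes $c_1^*,\dots,c_\ell^*$. As the factor $\tfrac1n$ cancels in the ratio, it then suffices to prove $\opt/\greedy\le n^{p+1}/\sum_{j=1}^n j^p$, because raising to the power $1/p$ yields exactly the claimed ratio~(\ref{eq:ratio}).

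First I would establish the structural heart of the argument, a per-element lower bound on greedy part sizes. Fix an optimal part $O_i=(\phi^*)^{-1}(S)\subseteq S$ of size $c_i^*$ and order its elements $v_{i,1},\dots,v_{i,c_i^*}$ by the greedy iteration at which they are covered (ties broken arbitrarily). Just before greedy covers $v_{i,s}$, the elements $v_{i,s},\dots,v_{i,c_i^*}$ of $O_i$ are all still uncovered, so the single subset $S$ could cover at least $c_i^*-s+1$ uncovered elements; since greedy picks a subset of maximal residual coverage, the greedy part containing $v_{i,s}$ satisfies $a_{v_{i,s}}\ge c_i^*-s+1$. As $v\mapsto(i,s)$ is a bijection from $V$ onto $\{(i,s):1\le i\le\ell,\ 1\le s\le c_i^*\}$ and $x\mapsto x^p$ is increasing for $p>0$, summing the $p$-th powers gives
\[
\greedy=\sum_{v\in V}(a_v)^p\ \ge\ \sum_{i=1}^{\ell}\sum_{s=1}^{c_i^*}(c_i^*-s+1)^p=\sum_{i=1}^{\ell}\sum_{t=1}^{c_i^*}t^p .
\]

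To finish I would bound each inner sum against $(c_i^*)^{p+1}$. Writing $h(c):=c^{p+1}/\sum_{t=1}^{c}t^p$, the inequality $\sum_{t=1}^{c}t^p\ge c^{p+1}/h(n)$ holds for every $c\le n$ as soon as $h$ is non-decreasing; granting this and summing over $i$ yields $\greedy\ge\opt/h(n)$, hence $\opt/\greedy\le h(n)=n^{p+1}/\sum_{j=1}^n j^p$, as required. The main obstacle is therefore the monotonicity of $h$. The clean way to see it is to rewrite $1/h(c)=\tfrac1c\sum_{t=1}^{c}(t/c)^p$ as the right-endpoint Riemann sum of $\int_0^1 x^p\,dx$ over $c$ equal subintervals; since $x^p$ is increasing, these normalized sums are non-increasing as the number of subintervals grows, so $1/h$ is non-increasing and $h$ non-decreasing. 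I expect this refinement monotonicity to be the one genuinely delicate point, because the partitions into $c$ and $c+1$ equal pieces are not nested; alternatively the cross-multiplied form $c^{p+1}(c+1)^p\le\big[(c+1)^{p+1}-c^{p+1}\big]\sum_{t=1}^{c}t^p$ can be attacked directly by integral estimates of the two factors.

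Finally, for $p<0$ the structural bound $a_{v_{i,s}}\ge c_i^*-s+1$ is sign-free, but every step involving the map $x\mapsto x^p$ (now decreasing) reverses: the power inequality above flips to $\greedy\le\sum_i\sum_{t=1}^{c_i^*}t^p$, the exponent $1/p$ is negative, and the needed monotonicity of $h$ reverses to non-increasing (which is again exactly the refinement behaviour of the right Riemann sums of the now-decreasing $x^p$). Since all these reversals occur together, the same bookkeeping delivers the identical ratio~(\ref{eq:ratio}); as a sanity check, $p=-1$ reproduces the familiar $\left(n^{0}/\sum_{j=1}^n j^{-1}\right)^{-1}=H_n$ guarantee of greedy for minimum set cover.
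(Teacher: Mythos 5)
Your proposal is correct and takes essentially the same route as the paper's own proof: the same greedy charging argument within each optimal part (the $s$th element of an optimal part of size $c_i^*$ covered by greedy lands in a greedy part of size at least $c_i^*-s+1$), the same per-part bound $\sum_{v\in C_i}(a'_v)^p\geq\sum_{j=1}^{c_i}j^p$, the same step of bounding the per-part ratio $c^{p+1}/\sum_{j=1}^c j^p$ by its value at $c=n$ via monotonicity in $c$, and the same wholesale reversal of inequalities for $p<0$. The one point you flag as delicate --- that monotonicity --- is exactly the step the paper itself asserts without proof (``this ratio is increasing with $c_i$''), so your write-up is no less complete than the original; just note that your Riemann-refinement justification is not a valid general principle (right Riemann sums of an increasing \emph{step} function need not decrease under refinement into equal, non-nested pieces), so the claim genuinely requires a power-function-specific argument, e.g.\ an induction reducing it to the fact that $\bigl(1-(1+u)^{-p}\bigr)/u$ is non-increasing in $u>0$, which holds since $1-(1+u)^{-p}$ is concave and vanishes at $0$.
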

\begin{proof}
We consider an optimal cover $\phi_{\opt}$, and a part $C_i=\phi_{\opt}^{-1}(S_i)$ in this cover, of size $|C_i|=c_i$. We define $a'_v := |\phi ^{-1} (\phi  (v)) |$ for the cover $\phi $ returned by the greedy algorithm. 

We first suppose that $p\geq 0$, and give a lower bound on the value of the cover $\phi$ restricted to $C_i$. We do so by examining the elements of $C_i$ in the order in which they are covered by the greedy algorithm, breaking ties arbitrarily. The first covered element $v_1\in C_i$ must belong to a part of size at least $c_i$ in $\phi $, since $C_i$ can be chosen as a new part, and the greedy algorithm chooses the largest part. Hence $a'_{v_1}\geq c_i$. Similarly, the second element $v_2$ of $C_i$ that is covered by greedy must belong to a class of size at least $c_i-1$. Hence $a'_{v_2}\geq c_i-1$. In general, for the $k$th element $v_k$ covered by the greedy algorithm, $a'_{v_k}\geq c_i - k + 1$. Thus we have 
\begin{equation}
\label{eqn:lbci}
\sum_{v\in C_i} (a'_v)^p \geq \sum_{j=1}^{c_i} j^p. 
\end{equation}
Letting $a_v:= |\phi_{\opt} ^{-1} (\phi_{\opt} (v)) |$, the corresponding value for $\phi_{\opt}$ is $\sum_{v\in C_i} a_v^p = c_i^{p+1}$, hence we get
the following upper bound
\begin{equation}
\label{eqn:upratio}
\frac{\sum_{v\in C_i} a_v^p}{\sum_{v\in C_i} (a'_v)^p} \leq 
\frac{{c_i}^{p+1}}{\sum_{j=1}^{c_i} j^p}.
\end{equation}
This ratio is increasing with $c_i$, and holds for all the parts $C_i$ of $\phi_{\opt}$. Letting $c_i=n$ and taking the $p$th root gives the result.

A similar reasoning holds for $p<0$, with the direction of inequalities~(\ref{eqn:lbci}) and (\ref{eqn:upratio}) reversed.
\qed\end{proof}

\begin{figure}
\begin{center}
\includegraphics[scale=.4]{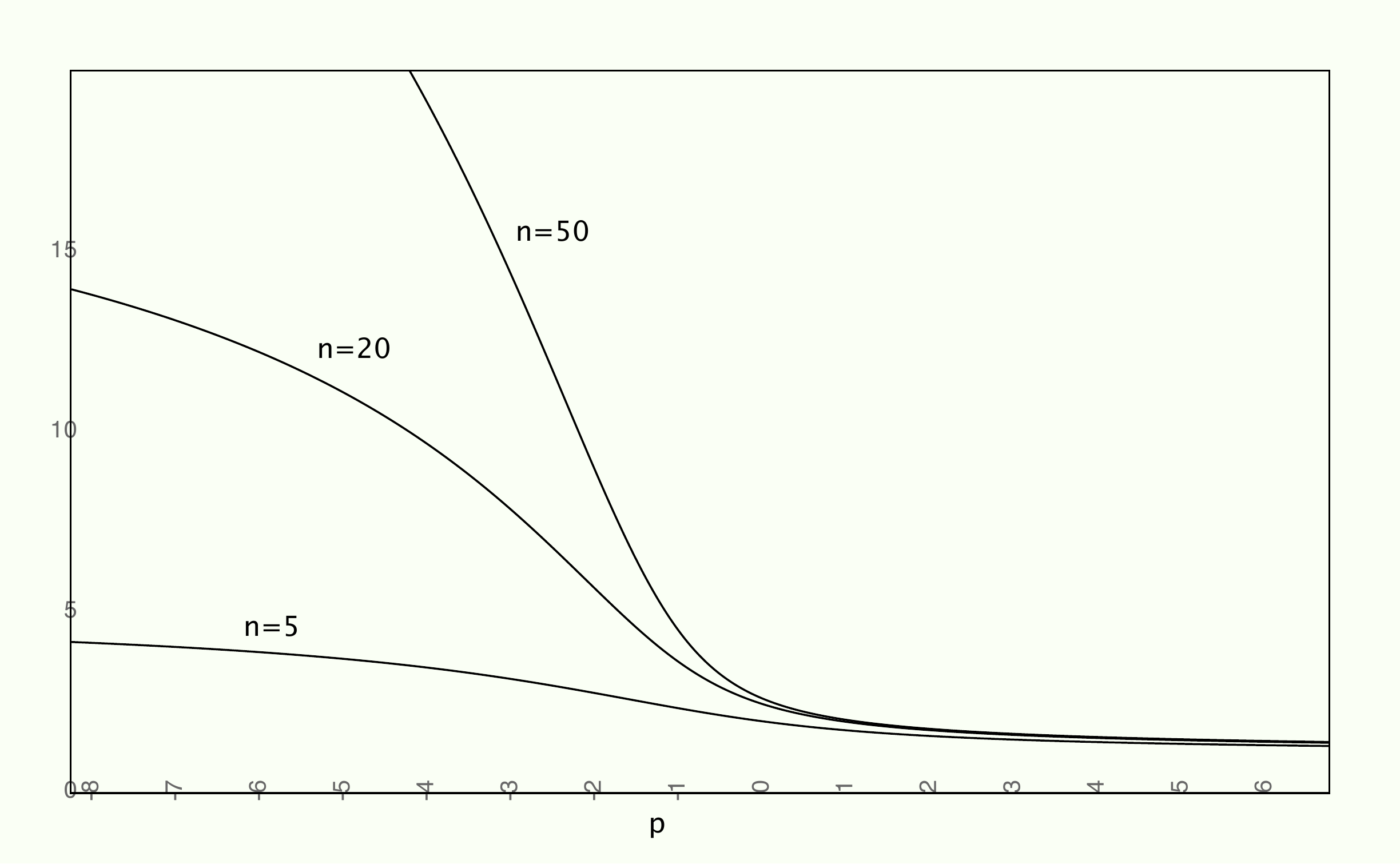}
\end{center}
\caption{\label{fig:ratios}Approximation ratios for the greedy algorithm.}
\end{figure}

The approximation ratios for various values of $p$ and $n$ are given in Fig.~\ref{fig:ratios}. We next give a constant upper bound on the approximation ratio in the case $p\geq 0$. We need the following lemma.
\begin{lemma}
\label{lem:simplesum}
For $p\in{\mathbb R}^+$ and $n\in \mathbb N$, $$\sum_{j=1}^n j^p \geq \frac{n^{p+1}}{p+1}.$$
\end{lemma}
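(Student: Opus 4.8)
The plan is to prove this by comparing the sum to an integral, exploiting the fact that $x^p$ is a monotone increasing function on $[0,\infty)$ when $p\geq 0$. The target quantity $\frac{n^{p+1}}{p+1}$ is precisely $\int_0^n x^p\,dx$, which strongly suggests treating the sum as a left-endpoint-dominating Riemann-type estimate of this integral.

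First I would observe that, since $p\geq 0$, the function $x\mapsto x^p$ is nondecreasing, so for each integer $j\in\{1,\dots,n\}$ and each $x\in[j-1,j]$ we have $x^p\leq j^p$. Integrating this pointwise inequality over the unit interval $[j-1,j]$ yields
\begin{equation*}
\int_{j-1}^{j} x^p\,dx \;\leq\; j^p.
\end{equation*}
Summing these inequalities over $j=1,\dots,n$ and using additivity of the integral over adjacent intervals gives
\begin{equation*}
\sum_{j=1}^n j^p \;\geq\; \sum_{j=1}^n \int_{j-1}^{j} x^p\,dx \;=\; \int_0^n x^p\,dx \;=\; \frac{n^{p+1}}{p+1},
\end{equation*}
which is exactly the claimed bound.

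This argument is essentially self-contained and elementary; there is no substantial obstacle, only a point requiring a little care. The one place to be attentive is the monotonicity step: the inequality $x^p\leq j^p$ on $[j-1,j]$ relies on $p\geq 0$ (for $p<0$ the function decreases and the bound would reverse), and one should note that $x^p$ is well defined and continuous on $[0,n]$ for $p\geq 0$, so the integrals exist. As an alternative not requiring integration, the same inequality can be established by induction on $n$, verifying the base case $n=1$ and using the estimate $(n+1)^{p+1}-n^{p+1}\leq (p+1)(n+1)^p$ (itself a consequence of the mean value theorem applied to $t\mapsto t^{p+1}$) for the inductive step; however, the integral-comparison route is the cleaner and shorter of the two.
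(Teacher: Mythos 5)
Your proof is correct and follows essentially the same route as the paper, which also bounds the sum from below by $\int_0^n x^p\,dx$ (the paper merely says this "can be checked graphically" and treats $p=0$ separately). Your version is in fact more careful, since you make the monotonicity argument on each interval $[j-1,j]$ explicit rather than appealing to a picture.
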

\begin{proof}
The inequality holds for $p=0$. For $p>0$, it can be checked graphically that approximating the sum by an integral yields a lower bound:
$$
\sum_{j=1}^n j^p = \sum_{j=0}^n j^p > \int_0^n x^p dx = \frac{n^{p+1}}{p+1}.
$$
\qed\end{proof}

Combining lemmas~\ref{lem:approx} and \ref{lem:simplesum} proves the following theorem. Tightness can be proved using known tight examples for special cases (see for instance~\cite{CFJ06}).
\begin{theorem}
\label{thm:approx}
The maximum $p$-mean set cover problem is approximable in polynomial time within a factor of $(p+1)^{\frac 1p }$ for $p\in{\mathbb R}^+$. This bound is asymptotically tight.
\end{theorem}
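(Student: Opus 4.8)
The plan is to get the claimed factor by simply chaining the two preceding lemmas, and then to establish asymptotic tightness by exhibiting a family of instances on which both the greedy analysis of Lemma~\ref{lem:approx} and the estimate of Lemma~\ref{lem:simplesum} are met with (asymptotic) equality. First I would substitute the bound of Lemma~\ref{lem:simplesum} into the ratio~(\ref{eq:ratio}) of Lemma~\ref{lem:approx}: since $\sum_{j=1}^n j^p \geq n^{p+1}/(p+1)$ for $p\in{\mathbb R}^+$, the fraction inside~(\ref{eq:ratio}) satisfies $\frac{n^{p+1}}{\sum_{j=1}^n j^p} \leq p+1$, and taking the $p$th root yields the factor $(p+1)^{\frac1p}$. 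I would also record that $(p+1)^{\frac1p} < e$ for every $p>0$ (equivalently $\ln(p+1) < p$), which is what turns this into a uniform $e$-approximation across all positive $p$.

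For tightness I would argue in two steps. First, the gap between the exact greedy bound~(\ref{eq:ratio}) and $(p+1)^{\frac1p}$ vanishes as $n$ grows: the same integral comparison used in Lemma~\ref{lem:simplesum} also gives the upper estimate $\sum_{j=1}^n j^p \leq \int_1^{n+1} x^p\,dx = \frac{(n+1)^{p+1}-1}{p+1}$, so that $\sum_{j=1}^n j^p = \frac{n^{p+1}}{p+1}\,(1+o(1))$ and hence the bound~(\ref{eq:ratio}) itself tends to $(p+1)^{\frac1p}$. Second, I would construct instances that realize this worst case, i.e.\ on which inequality~(\ref{eqn:lbci}) holds with equality for every optimal part. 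The target is a family indexed by $m$ in which $\phi_{\opt}$ partitions $V$ into parts all of size $m$ (so the optimum equals $m$), while inside each optimal part $C_i$ the greedy cover $\phi$ is forced to cover the $k$th element in a part of size exactly $c_i-k+1 = m-k+1$. By the computation leading to~(\ref{eqn:upratio}) this makes $\sum_{v}(a'_v)^p = \sum_i \sum_{j=1}^m j^p$, so the $p$-mean of $\phi$ equals $\left(\frac1n \sum_i \sum_{j=1}^m j^p\right)^{\frac1p}$, which converges to $m/(p+1)^{\frac1p}$ as $m\to\infty$; the ratio therefore approaches $(p+1)^{\frac1p}$. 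Such staircase gadgets are precisely the known tight examples for the special cases $p\to0$~\cite{CFJ06} and $p=1$~\cite{DJLLP07}, and the plan is to adapt them to general $p$.

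The routine part is the upper bound, which is a one-line combination of the two lemmas. The hard part will be the tightness construction: one must design the collection $\bigs$ so that at every iteration the set covering the most uncovered elements carves a single element out of several optimal parts in a triangular pattern, yielding greedy parts of sizes $m, m-1, \dots, 1$, while simultaneously keeping all of the optimal parts intact and of size $m$. Verifying that the greedy algorithm actually follows this adversarial order (including the tie-breaking) and that no larger set is ever available at any step is where the real work lies, which is why I would build on the existing constructions of~\cite{CFJ06,DJLLP07} rather than reinvent the gadget from scratch.
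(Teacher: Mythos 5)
Your proposal is correct and follows essentially the same route as the paper: the upper bound is obtained exactly as in the paper by substituting Lemma~\ref{lem:simplesum} into the ratio~(\ref{eq:ratio}) of Lemma~\ref{lem:approx}, and tightness is delegated to (adaptations of) the known tight constructions for the special cases, which is precisely what the paper does by citing~\cite{CFJ06}. Your extra observation that the exact greedy bound~(\ref{eq:ratio}) itself converges to $(p+1)^{\frac 1p}$ as $n\to\infty$ is a correct and somewhat more explicit treatment of the asymptotics than the paper's one-line remark, but it does not change the argument.
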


Note that $\lim_{p\to +\infty} (p+1)^{\frac 1p } = 1$, hence in the case of $p\to +\infty$, the approximation ratio is equal to $1$. This formalizes the trivial observation that if our goal is to maximize the size of the largest part, then the greedy algorithm returns an optimal solution.
Also, $\lim_{p\to 0} (p+1)^{\frac 1p } = e$, which proves that the greedy algorithm approximates the minimum entropy set cover within an additive term
of $\log e$ bits. This was shown by Cardinal, Fiorini, and Joret~\cite{CFJ06}. Finally, for $p=1$, the greedy algorithm returns a 2-approximation. A proof of this result was given by Dessmark, Jansson, Lingas, Lundell, and Persson~\cite{DJLLP07}.\\

We now turn to the case $p<0$. We know that the greedy algorithm approximates the problem for $p=-1$ within a logarithmic factor. The following result shows that the performance of greedy degrades dramatically as $p$ becomes smaller. 

\begin{theorem}
\label{thm:approxq}
The maximum $p$-mean set cover problem is approximable in polynomial time within a factor of $n^{1-\frac 1q} \zeta (q)^{\frac 1q}$ for any real $p = -q <-1$, where $\zeta (q)=\sum_{j=1}^{\infty} j^{-q}$ is the Riemann zeta function. 
\end{theorem}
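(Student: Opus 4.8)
The plan is to specialize the general approximation factor from Lemma~\ref{lem:approx} to the regime $p=-q<-1$ and then replace the partial sum appearing there by the convergent series it bounds. Recall that Lemma~\ref{lem:approx} establishes an approximation factor of $\left(\frac{n^{p+1}}{\sum_{j=1}^n j^p}\right)^{\frac1p}$ for every real $p$. First I would substitute $p=-q$, so that the outer exponent $\frac1p$ becomes $-\frac1q$; since a negative exponent inverts the base, the factor rewrites as
\begin{equation*}
\left(\frac{n^{1-q}}{\sum_{j=1}^n j^{-q}}\right)^{-\frac1q}=\left(\frac{\sum_{j=1}^n j^{-q}}{n^{1-q}}\right)^{\frac1q}.
\end{equation*}

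The key step is then to bound the partial sum in the numerator. Because $q>1$, the series $\sum_{j=1}^\infty j^{-q}$ converges to $\zeta(q)$, so $\sum_{j=1}^n j^{-q}\le\zeta(q)$ for every $n$. Substituting this bound into the expression above and simplifying the power of $n$ via $n^{-(1-q)/q}=n^{(q-1)/q}=n^{1-\frac1q}$ yields exactly the claimed factor $\zeta(q)^{\frac1q}\,n^{1-\frac1q}$.

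The only point that requires genuine care is the bookkeeping of inequality directions, since here $p<0$. The exponent $\frac1p=-\frac1q$ is negative, so raising a base that is at most $1$ (and indeed $n^{1-q}/\sum_{j=1}^n j^{-q}\le 1$, because $n^{1-q}\le 1\le\sum_{j=1}^n j^{-q}$) to this negative power correctly produces a value at least $1$, as an approximation ratio must be. After inverting to the $\frac1q$-power form, enlarging the numerator from $\sum_{j=1}^n j^{-q}$ up to $\zeta(q)$ enlarges the ratio, so the bound points in the right direction. I do not anticipate any substantial obstacle beyond tracking these sign reversals through the negative exponent; the convergence of $\zeta(q)$ for $q>1$ is precisely what makes the partial-sum replacement legitimate and furnishes the closed form.
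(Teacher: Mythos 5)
Your proof is correct and matches the paper's own argument essentially verbatim: both substitute $p=-q$ into the factor of Lemma~\ref{lem:approx}, invert the fraction via the negative exponent, bound the partial sum $\sum_{j=1}^n j^{-q}$ by $\zeta(q)$ using convergence for $q>1$, and simplify the power of $n$. Your extra remarks on tracking inequality directions through the negative exponent are a harmless elaboration of steps the paper leaves implicit.
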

\begin{proof}
We consider expression~(\ref{eq:ratio}) in lemma~\ref{lem:approx} and replace $p$ by $-q$:
\begin{equation}
\left(\frac{n^{1-q}}{\sum_{j=1}^{n} j^{-q}} \right)^{-\frac 1q}   = 
					\left(\frac{\sum_{j=1}^{n} j^{-q}}{n^{1-q}} \right)^{\frac 1q} 
					 \leq \left( \frac{\zeta (q) }{n^{1-q}} \right)^{\frac 1q} 
					 = n^{1-\frac 1q} \zeta (q)^{\frac 1q}.
\end{equation}
\qed\end{proof}

The bound is asymptotically tight if we replace $n$ by $\max_i |S_i|$. Note that we need $q>1$, otherwise the Dirichlet series defining the zeta function does not converge. In particular, when $q=1$ (and thus $p=-1$), we have the harmonic series, which is the approximation ratio for the minimum set cover problem.

An interesting special case is when $p=-2$. This means that the cost of a part of size $c_i$ in the cover is $1/c_i$. In that case, the approximation ratio of the greedy algorithm becomes 
\begin{equation}
n^{1-\frac 12} \zeta (2)^{\frac 12} = \pi \sqrt{\frac{n}{6}} .
\end{equation}

We now show that the approximability result in theorem~\ref{thm:approx} for positive values of $p$ is the best we can hope for, unless $\P = \NP$. 
We need the following lemma, which is a simple consequence of the convexity of the function $f(x)=x^{p+1}$.\\

Consider two sorted sequences $c_1\geq c_2 \geq\ldots\geq c_k$ and $c'_1\geq c'_2\geq\ldots\geq c'_k$. We say that
$(c_i)$ {\em dominates} $(c'_i)$ if 
\begin{equation}
\sum_{i=1}^j c_i \geq \sum_{i=1}^j c'_i\ \forall\ j\in\{ 1,2,\ldots ,k\}.
\end{equation}
\begin{lemma}
\label{lem:jungle}
If $(c_i)$ {\em dominates} $(c'_i)$, then for any $p\in\mathbb R^+$,
\begin{equation}
\sum_{i=1}^k c^{p+1}_i \geq \sum_{i=1}^k \left( c'_i\right)^{p+1} .
\end{equation}
\end{lemma}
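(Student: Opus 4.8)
The plan is to derive the inequality from the convexity of $f(x) = x^{p+1}$ through the tangent-line (subgradient) inequality, and then to exploit the domination hypothesis by a summation by parts (Abel summation). Throughout, write $f(x) = x^{p+1}$, so that $f'(x) = (p+1)x^p$ is nonnegative and nondecreasing on $[0,\infty)$ for $p\in\mathbb{R}^+$.

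First I would record the pointwise consequence of convexity: for every index $i$, the tangent line to $f$ at the point $c'_i$ lies below the graph, so $f(c_i) \ge f(c'_i) + f'(c'_i)(c_i - c'_i)$, that is, $f(c_i) - f(c'_i) \ge f'(c'_i)(c_i - c'_i)$. Summing over $i$ reduces the claim to proving that $\sum_{i=1}^k f'(c'_i)(c_i - c'_i) \ge 0$.

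Next, set $g_i := f'(c'_i) = (p+1)(c'_i)^p$ and $d_i := c_i - c'_i$, and let $D_j := \sum_{i=1}^j d_i$ denote the partial sums, with $D_0 = 0$. The domination hypothesis is exactly $D_j \ge 0$ for all $j$. Because $(c'_i)$ is sorted in nonincreasing order and $x\mapsto x^p$ is nondecreasing, the weights $g_i$ form a nonincreasing, nonnegative sequence $g_1 \ge g_2 \ge \dots \ge g_k \ge 0$. Summation by parts then gives
\[
\sum_{i=1}^k g_i d_i = \sum_{i=1}^{k-1} (g_i - g_{i+1}) D_i + g_k D_k,
\]
and every summand on the right is a product of nonnegative quantities: the differences $g_i - g_{i+1}$ are nonnegative by monotonicity of $g$, the $D_i$ are nonnegative by domination, and the boundary term $g_k D_k$ is nonnegative because $g_k \ge 0$ (here using $c'_k \ge 0$, i.e.\ that part sizes are nonnegative) and $D_k \ge 0$. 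Hence $\sum_i g_i d_i \ge 0$, which completes the argument.

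I expect the only delicate points to be bookkeeping rather than conceptual. Two things warrant care: linearizing at $c'_i$ rather than at $c_i$, since only the tangent at $c'_i$ yields a lower bound in the correct direction; and the boundary term $g_k D_k$, which does not vanish under mere domination, because the total sums need not be equal. The latter is precisely where nonnegativity of the part sizes -- equivalently, monotonicity of $f$ on the relevant domain -- is needed beyond bare convexity. When $(c_i)$ and $(c'_i)$ have equal totals, as for two covers of the same ground set, $D_k = 0$ and the boundary term disappears, so convexity alone suffices, matching the remark preceding the lemma.
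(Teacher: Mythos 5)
Your proof is correct, and it is more than the paper offers: the paper states Lemma~\ref{lem:jungle} without proof, asserting only that it is ``a simple consequence of the convexity of the function $f(x)=x^{p+1}$.'' Your argument --- linearize at the $c'_i$ via the tangent inequality $f(c_i)\ge f(c'_i)+f'(c'_i)(c_i-c'_i)$, then apply Abel summation with the nonincreasing nonnegative weights $g_i=(p+1)(c'_i)^p$ and the nonnegative partial sums $D_j$ --- is the standard proof of the Tomi\'c--Weyl theorem (weak majorization plus a convex \emph{nondecreasing} function), and every step checks out: $g$ is nonincreasing because $(c'_i)$ is sorted and $x\mapsto x^p$ is nondecreasing for $p\ge 0$, the edge case $p=0$ reduces to the $j=k$ domination inequality, and $f'(0)=0$ causes no trouble for $0<p<1$. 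Your closing remark is a genuinely valuable sharpening of the paper's claim: since domination does not force $D_k=0$, convexity alone is \emph{not} sufficient (a convex $f$ that is decreasing on the relevant range gives counterexamples), and your proof correctly isolates where monotonicity of $f$ --- equivalently $g_k\ge 0$ --- enters, namely in the boundary term $g_k D_k$. For $f(x)=x^{p+1}$ on nonnegative part sizes both properties hold, so the lemma stands, but the paper's attribution to convexity alone glosses over exactly the point you flagged.
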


\begin{theorem}
\label{thm:inapprox}
It is $\NP$-hard to approximate the maximum $p$-mean set cover problem within a factor less than $(p+1)^{\frac 1p }$ for $p\in{\mathbb R}^+$.
\end{theorem}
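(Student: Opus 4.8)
The plan is to reduce from the gap set-cover instances produced by the reduction of Feige~\cite{F98} and Feige, Lov\'asz, and Tetali~\cite{FLT04}, and to use the domination lemma (Lemma~\ref{lem:jungle}) as the device that converts a single ``coverage-profile'' gap into an $M_p$ gap for every $p\in\mathbb{R}^+$ at once. Concretely, for integer parameters $m$ and $k$ growing with the instance size, I would invoke that reduction to obtain a family of set systems on a ground set $V$ of size $N=km$ with the following two-sided guarantee. In the \emph{completeness} case (corresponding to a satisfiable base instance), $V$ admits a cover whose sorted part sizes dominate the sequence $A=(m,m,\dots,m)$ of $k$ copies of $m$; in particular $V$ partitions into $k$ equal parts of size $m$. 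In the \emph{soundness} case, every cover of $V$ has sorted part sizes \emph{dominated by} the profile $B$ containing $k/s$ parts of each size $s\in\{1,\dots,m\}$, which is exactly the staircase profile arising in the per-part lower bound $\sum_{v\in C_i}(a'_v)^p\ge\sum_{j=1}^{c_i} j^p$ of Lemma~\ref{lem:approx}. This is the same construction that underlies the $p\to 0$ entropy hardness of~\cite{CFJ06}; the point of Lemma~\ref{lem:jungle} is that one fixed instance now serves all positive $p$ simultaneously.

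Given these instances the analysis is purely combinatorial and, crucially, independent of $p$. In the completeness case the optimal cover dominates $A$, so Lemma~\ref{lem:jungle} gives $\sum_i c_i^{p+1}\ge\sum_i A_i^{p+1}=k\,m^{p+1}$, whence $M_p\ge m$. In the soundness case the profile $B$ dominates the sorted part sizes of \emph{every} cover, so Lemma~\ref{lem:jungle} gives $\sum_i c_i^{p+1}\le\sum_i B_i^{p+1}=\sum_{s=1}^m \tfrac{k}{s}\,s^{p+1}=k\sum_{s=1}^m s^{p}$, whence $M_p\le\bigl(\tfrac1N\,k\sum_{s=1}^m s^p\bigr)^{1/p}$. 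The ratio between the completeness and soundness values is therefore
\begin{equation*}
\left(\frac{k\,m^{p+1}}{k\sum_{s=1}^m s^p}\right)^{\frac1p}=\left(\frac{m^{p+1}}{\sum_{s=1}^m s^p}\right)^{\frac1p},
\end{equation*}
and by Lemma~\ref{lem:simplesum} this tends to $(p+1)^{1/p}$ as $m\to\infty$. Hence distinguishing the two cases, which is $\NP$-hard by the soundness of the reduction, would follow from any approximation with factor strictly below $(p+1)^{1/p}$, for every fixed $p\in\mathbb{R}^+$.

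The main obstacle is establishing soundness: that in a NO instance \emph{no} cover can concentrate the ground set into parts beyond the staircase profile $B$. This is where the strength of the Feige et al.\ construction is needed, since the proof-system soundness must be translated into a partial-sum (coverage) statement of the exact form ``the $j$ largest parts together cover at most as much as the first $j$ entries of $B$,'' uniformly over all covers; only then does Lemma~\ref{lem:jungle} apply. I expect the bulk of the work to lie in verifying that their coverage guarantee has precisely this partial-sum shape, and in tuning $m$, $k$, and the soundness error of the base instance so that the loss is absorbed into arbitrarily small slack, yielding the clean bound $(p+1)^{1/p}-\epsilon$ for every $\epsilon>0$. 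A secondary, routine point is the integrality of the multiplicities $k/s$, which perturbs $B$ only by lower-order terms and leaves the asymptotic ratio unchanged.
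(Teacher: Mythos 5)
Your overall route is the paper's route: reduce from the Feige--Lov\'asz--Tetali gap instances, use Lemma~\ref{lem:jungle} to replace every soundness cover by a single dominating profile, compute the ratio of $\sum_i c_i^{p+1}$ between the two cases, and invoke Lemma~\ref{lem:simplesum} (or the matching integral estimate) to get the gap $p+1$, hence $(p+1)^{1/p}$ after the $p$th root. Your completeness side ($k$ parts of size $m$, value $k\,m^{p+1}$, optimal by Lemma~\ref{lem:jungle}) is exactly the paper's equation for the satisfiable case with $k=t$, $m=n/t$. Up to a change of coordinates, even your staircase is the paper's profile: the paper's geometric profile $\frac nt\left(1-\frac1t\right)^{i-1}$ has, asymptotically, about $t/s$ parts of size $s$, which is your $B$ with $k=t$.

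The genuine gap is in the soundness claim, and your proposal mis-locates where the work lies. The FLT guarantee bounds the coverage of the first $i$ parts only for $i\in\{1,\ldots,at\}$ with $a$ an \emph{arbitrary but fixed constant}. Your profile $B$ runs all the way down to parts of size $1$, i.e., it constrains roughly $kH_m\approx t\ln m$ parts, which would require $a\approx\ln m\to\infty$ --- more than the reduction provides. And the claim is not merely unverified but false for these instances: beyond position $at$, a cover may still place parts of size up to $n/t$ on fresh elements, while the staircase's parts at those positions have size only about $me^{-a}$ and its partial-sum increments are about $e^{-a}/t$ per part; such a cover overtakes $B$'s partial sums, so $B$ does not dominate all covers and Lemma~\ref{lem:jungle} cannot be applied to it. The missing idea is the paper's truncation-plus-tail argument: dominate every cover by the geometric/staircase profile \emph{truncated at $at$ parts}, followed by $r\sim t$ flat parts of size $\frac nt\left(1-\frac1t\right)^{at}$; this tail contributes an additive $\frac{1}{t^p}e^{-a(p+1)}$ to the soundness value, so the gap is only $\left(\frac{1}{p+1}\left(1-e^{-a(p+1)}\right)+e^{-a(p+1)}\right)^{-1}$, which tends to $p+1$ as $a\to\infty$ --- yielding $(p+1)^{1/p}-\epsilon$ for every $\epsilon>0$, not the clean untruncated bound $k\sum_{s=1}^m s^p$ you computed. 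By contrast, the points you flag as the residual difficulties (integrality of $k/s$, absorbing the soundness slack $\epsilon$) are indeed routine; the tail analysis is the step your plan is missing.
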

\begin{proof}
Feige, Lovasz, and Tetali~\cite{FLT04} gave a procedure for transforming a 3SAT-6 formula into a set system $(V,\bigs )$ with the following properties:
\begin{itemize}
\item each subset $S_i\in\bigs$ has size $n/t$ for a certain parameter $t$,
\item if the formula is satisfiable, then there exists an exact cover of $V$ with $t$ subsets,
\item if the formula is $\delta$-satisfiable, that is, if at most a fraction $\delta$ of the clauses can be satisfied, then every $i$ subsets of $\bigs$ 
cover at most a fraction $(1-(1-1/t)^i)-\epsilon$ of the elements of $V$, for $i\in\{1,2,\ldots ,at\}$ and any choice of the constants $\epsilon > 0$ and
$a>0$.
\end{itemize}
Given a formula known to be either satisfiable or $\delta$-satisfiable, the problem of distinguishing between the two is $\NP$-hard~\cite{FLT04}. Using the
transformation above, we show that a polynomial algorithm with an approximation ratio less than $(p+1)^{\frac 1p }$ for maximum $p$-mean set cover would solve this problem.

If the formula is satisfiable, then $V$ can be covered by exactly $t$ disjoint sets of $\bigs$. From Lemma~\ref{lem:jungle}, this is the optimal solution. The part sizes $c_i$ in this solution satisfy
\begin{equation}
\label{eq:exact}
\sum_{i=1}^k \left( \frac{c_i}{n}\right)^{p+1}=\sum_{i=1}^t \left( \frac{1}{t}\right)^{p+1} = \frac{1}{t^p} .
\end{equation}

We now suppose the formula is only $\delta$-satisfiable. We consider the distribution in which the $i$th part covers a fraction 
$$
\left( 1-\left( 1-1/t\right) ^i\right) - \left( 1-\left( 1-1/t\right)^{i-1}\right) = \frac{1}{t}\left(1-\frac{1}{t}\right)^{i-1}
$$ 
of the elements of $V$, for $i\in\{1,2,\ldots ,at\}$, and the remaining parts cover exactly a fraction $\frac{1}{t}\left(1-\frac{1}{t}\right)^{at}$ each. We denote
by $r$ the number of remaining parts, so that the sum of the fractions equals $1$. From Lemma~\ref{lem:jungle} and the properties of the reduction, this distribution dominates all other achievable distributions. Therefore the following upper bound holds.
\begin{eqnarray}
\sum_{i=1}^k \left( \frac{c_i}{n}\right)^{p+1} & \leq & \sum_{i=0}^{at}{\left(\frac{1}{t}\left(1-\frac{1}{t}\right)^i \right)^{p+1}} + r \left(\frac{1}{t}\left(1-\frac{1}{t}\right)^{at} \right)^{p+1} \\
& \simeq & \frac{1}{t^{p+1}} \sum_{i=0}^{at} e^{-\frac{(p+1)i}{t}} + \frac{r}{t^{p+1}} e^{-a(p+1)}.  \label{eqn:tmp}
\end{eqnarray}
We can approximate the sum by an integral :
\begin{equation}
\sum_{i=0}^{at} e^{-\frac{(p+1)i}{t}}  \simeq  \int_0^{at} e^{-\frac{(p+1)x}{t}}\cdot dx =  \frac t{p+1} \left( 1 - e^{-a(p+1)} \right) .
\end{equation}
The value $r$ is the number of parts of size $\frac{1}{t}\left(1-\frac{1}{t}\right)^{at}\simeq \frac 1t e^{-a}$ needed to cover a fraction 
$1-\sum_{i=0}^{at} \frac{1}{t}\left(1-\frac{1}{t}\right)^i \simeq e^{-a}$ of the elements. Thus $r\sim t$, and 
$$
\frac{r}{t^{p+1}} e^{-a(p+1)} \simeq \frac 1{t^p} e^{-a(p+1)}.
$$
Note that since the constant $t$ can be assumed to be arbitrary large~\cite{FLT04}, the approximations above are arbitrarily accurate. 
Hence expression~(\ref{eqn:tmp}) can be made arbitrarily close to:
\begin{equation}
\label{eq:delta}
\frac{1}{t^p} \left( \frac{1}{p+1} \cdot \left(1 - e^{-a(p+1)} \right) + e^{-a(p+1)}\right) .
\end{equation}
Now by choosing $a$ sufficiently large, the ratio between (\ref{eq:delta}) and (\ref{eq:exact}) can be made arbitrary close to $p+1$. The gap between the 
$p$-means is obtained by taking the $p$th root.
\qed\end{proof}

In the case $p\to 0$, the above inapproximability proof shows that the additive $\log e$ error term is best possible (unless $\P =\NP$) for the minimum entropy set cover problem. This was also shown previously by Cardinal, Fiorini, and Joret~\cite{CFJ06}.\\ 

Although we do not have a precise inapproximability threshold for negative values of $p$, we can prove the following result for $p\to -\infty$. That is the max-min set cover problem, in which we aim to maximize the size of the smallest part.

\begin{theorem}
\label{thm:inapproxmaxmin}
It is $\NP$-hard to approximate the max-min set cover problem within any constant factor.
\end{theorem}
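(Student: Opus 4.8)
The plan is to reuse the very same reduction of Feige, Lov\'asz, and Tetali~\cite{FLT04} employed in the proof of Theorem~\ref{thm:inapprox}, but to read off a gap for the max-min objective instead of for the $p$-mean. Recall that this reduction turns a 3SAT-6 formula into a set system $(V,\bigs)$ in which every subset has size $n/t$, where $t$ and the constant $a>0$ are free parameters. I would fix a large constant $a$, to be thought of as the target approximation factor, and argue that the optimal max-min value is $n/t$ in the satisfiable case but at most $n/(at)$ in the $\delta$-satisfiable case, so that the ratio is at least $a$.

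First, consider the satisfiable case. Here $V$ admits an exact cover by $t$ disjoint subsets, each of size exactly $n/t$. Assigning every element to its subset in this exact cover produces a cover $\phi$ with exactly $t$ nonempty parts, each of size $n/t$, so its smallest part has size $n/t$. Since no subset has size larger than $n/t$, no part can exceed this size, and hence the optimal max-min value is exactly $n/t$.

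Next, consider the $\delta$-satisfiable case, and let $\phi$ be an arbitrary cover. Its nonempty parts are indexed by the distinct subsets of $\bigs$ that $\phi$ uses, say $K$ of them, and the union of these $K$ subsets is all of $V$ because $\phi$ assigns every element. By the diminishing-returns property of the reduction, any $i\le at$ subsets cover at most a fraction $(1-(1-1/t)^i)-\epsilon<1$ of $V$, hence strictly fewer than all of $V$. Therefore a cover cannot be realized with $K\le at$ subsets, so $K>at$. Since the parts partition $V$, the smallest nonempty part has size at most the average $n/K<n/(at)$. Taking the maximum over covers, the optimal max-min value stays below $n/(at)$.

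Combining the two cases, a satisfiable formula yields optimum $n/t$ while a $\delta$-satisfiable one yields optimum below $n/(at)$, a multiplicative gap exceeding $a$. Any polynomial-time algorithm approximating max-min set cover within a factor $a$ would separate the two cases and thus decide the NP-hard promise problem of~\cite{FLT04}; as $a$ is an arbitrary constant, this rules out a constant-factor approximation unless $\P=\NP$. The only delicate point, and the real obstacle, is the soundness step: one must be certain that the coverage bound forbids covering $V$ with few subsets over the entire admissible range $i\in\{1,\dots,at\}$, which is exactly what forces $K$ above $at$ and lets us amplify the gap by taking $a$ as large as we please.
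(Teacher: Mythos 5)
Your proposal is correct and follows essentially the same route as the paper: the same Feige--Lov\'asz--Tetali reduction, the observation that in the $\delta$-satisfiable case more than $at$ subsets are needed so some part has size below $n/(at)$, versus the exact cover of value $n/t$ in the satisfiable case, with $a$ an arbitrarily large constant. Your version merely adds the explicit averaging step (smallest part at most $n/K$) and the matching upper bound $n/t$ in the satisfiable case, which the paper leaves implicit.
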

\begin{proof}
The proof uses the same reduction as the proof of theorem~\ref{thm:inapprox}. We consider set systems $(V,\bigs )$ constructed from a 3SAT-6 formula,
such that there exists an exact cover with $t$ parts of size $\frac nt$ if the formula is satisfiable, and every $i$ subsets of $\bigs$ 
cover at most a fraction $(1-(1-1/t)^i)-\epsilon$ of the elements, for $i\in\{1,2,\ldots ,at\}$, if the formula is $\delta$-satisfiable. But this means that in the latter case, at least $at$ subsets are needed to cover $V$. This implies that there is a part of size at most $\frac n{at}$. Since $a$ can be chosen arbitrarily greater than any constant, the gap can be made arbitrarily large.
\qed\end{proof}

\section{Graph Coloring}
\label{sec:coloring}

We now define the graph coloring variant of the maximum $p$-mean set cover problem.

\begin{definition}[Maximum $p$-mean graph coloring]
Given a simple, undirected graph $G=(V,E)$, find an assignment $\phi: V \mapsto \mathbb{N}$
of colors to vertices such that adjacent vertices receive different colors, and
$ M_p (\{ a_v : v\in V\} )$ is maximized, where $a_v :=|\phi^{-1}(\phi (v))|$ and $M_p$ is the generalized mean with parameter $p$.
\end{definition}

The greedy algorithms extends naturally to what is referred to as the MaxIS algorithm, in which a maximum independent set is iteratively 
removed from the graph. This procedure can run in polynomial time only if at each step we can find a maximum independent set in polynomial
time. This is true for large families of graphs, such as perfect graphs~\cite{GLS93}, and claw-free graphs~\cite{ND01}. 
We thus have the following corollary of theorem~\ref{thm:approx} (the proof of tightness is omitted).

\begin{corollary}
The maximum $p$-mean graph coloring problem restricted to perfect or claw-free graphs is approximable in polynomial time within a factor of $(p+1)^{\frac 1p }$ for $p\in{\mathbb R}^+$. This bound is asymptotically tight.
\end{corollary}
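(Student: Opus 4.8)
The plan is to recognize the maximum $p$-mean graph coloring problem as a special case of the maximum $p$-mean set cover problem, in which the implicit collection $\bigs$ consists of the (maximal) independent sets of $G$, and then to verify that the analysis behind Lemma~\ref{lem:approx} and Theorem~\ref{thm:approx} survives this specialization. A proper coloring assigns each vertex to a color class that must be an independent set, so a cover $\phi$ in the set-cover sense corresponds exactly to a proper coloring, with the part sizes $a_v$ coinciding. Under this correspondence the greedy algorithm becomes precisely the MaxIS algorithm: choosing a new part that covers as many uncovered elements as possible is the same as selecting a maximum independent set among the still-uncolored vertices.

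The point that must be checked is that the lower bound~(\ref{eqn:lbci}) continues to hold. In the proof of Lemma~\ref{lem:approx} one uses that an optimal part $C_i$, restricted to the not-yet-covered elements, remains a legal candidate for the greedy choice, so the part actually chosen is at least as large. For coloring this is immediate because independence is hereditary: if $C_i$ is a color class of an optimal coloring $\phi_{\opt}$, then its intersection with the currently uncolored vertices is still an independent set of the residual induced subgraph, hence a valid candidate for the MaxIS step. Consequently the inequalities $a'_{v_k}\ge c_i-k+1$ hold verbatim, the per-part bound~(\ref{eqn:upratio}) is unchanged, and combining with Lemma~\ref{lem:simplesum} exactly as in Theorem~\ref{thm:approx} yields the factor $(p+1)^{\frac 1p}$.

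The remaining ingredient is algorithmic rather than combinatorial, and this is where the only genuine difficulty lies: the MaxIS algorithm runs in polynomial time only if a maximum independent set can be computed in polynomial time at each iteration, whereas unrestricted maximum independent set is $\NP$-hard. I would invoke the fact that such a subroutine exists for perfect graphs, via the ellipsoid method of Gr\"otschel, Lov\'asz, and Schrijver~\cite{GLS93}, and for claw-free graphs~\cite{ND01}. Since both classes are hereditary, every residual induced subgraph arising during the execution stays inside the class, so the subroutine remains applicable throughout; this is precisely the restriction imposed in the statement.

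For the claim of asymptotic tightness (whose detailed verification is omitted, as announced), the approach is to realize the extremal set-cover configurations as colorings of suitable perfect or claw-free graphs, on which greedy attains the ratio. Concretely, I would adapt the tight examples already known for the special cases $p\to 0$ and $p=1$, namely those of Cardinal, Fiorini, and Joret~\cite{CFJ06} for minimum entropy coloring and of Dessmark et al.~\cite{DJLLP07} for maximum-edge clique partition, encoding them as graph instances within the required classes.
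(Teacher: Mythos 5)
Your proposal is correct and follows essentially the same route as the paper, which derives the corollary directly from Theorem~\ref{thm:approx} by viewing coloring as implicit set cover over independent sets, identifying greedy with the MaxIS algorithm, and citing the polynomial-time maximum independent set algorithms for perfect~\cite{GLS93} and claw-free~\cite{ND01} graphs (with tightness likewise left unproved). If anything, you are more explicit than the paper in checking that the bound~(\ref{eqn:lbci}) survives the implicit setting via heredity of independence, and that residual subgraphs remain in the graph class throughout the execution.
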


It may happen that we only have an approximate algorithm for the maximum independent set problem. Then the following result applies.
Proofs are given in appendix~\ref{app:rhoappx}.

\begin{theorem}
\label{thm:rhoappx}
If the maximum independent set problem can be approximated within a factor $\rho$ in polynomial time,
then the maximum $p$-mean graph coloring problem is approximable within a factor of $\rho (p+1)^{\frac 1p }$ in polynomial time.
\end{theorem}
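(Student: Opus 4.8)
The plan is to adapt the analysis behind Lemma~\ref{lem:approx} and Theorem~\ref{thm:approx}, tracking exactly how a $\rho$-approximate maximum independent set oracle weakens the per-step bound. Let $\phi_{\opt}$ be an optimal coloring and fix one of its color classes $C_i=\phi_{\opt}^{-1}(i)$, which is an independent set of size $c_i$. Write $a_v$ for the optimal part sizes and $a'_v$ for the part sizes produced by the MaxIS algorithm when it is run with the $\rho$-approximate oracle. As in the proof of Lemma~\ref{lem:approx}, I would examine the elements of $C_i$ in the order in which MaxIS removes them, breaking ties arbitrarily.

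The key step is the per-element bound. When the $k$th element $v_k\in C_i$ is removed, the $c_i-k+1$ elements of $C_i$ not yet removed still form an independent set in the current graph, so the maximum independent set there has size at least $c_i-k+1$. Since the oracle returns an independent set of size at least a $1/\rho$ fraction of the maximum, the color class chosen at that step --- which is precisely the class containing $v_k$ --- has size $a'_{v_k}\geq (c_i-k+1)/\rho$. Raising to the $p$th power and summing over $C_i$ yields
\begin{equation}
\sum_{v\in C_i} (a'_v)^p \geq \frac{1}{\rho^p}\sum_{j=1}^{c_i} j^p ,
\end{equation}
which is exactly inequality~(\ref{eqn:lbci}) degraded by the factor $\rho^{-p}$.

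From here the argument is routine. The optimal contribution of $C_i$ is $\sum_{v\in C_i} a_v^p = c_i^{p+1}$, so invoking Lemma~\ref{lem:simplesum} the per-class ratio is
\begin{equation}
\frac{\sum_{v\in C_i} a_v^p}{\sum_{v\in C_i} (a'_v)^p} \leq \frac{\rho^p\, c_i^{p+1}}{\sum_{j=1}^{c_i} j^p} \leq \rho^p (p+1).
\end{equation}
Because the classes $C_i$ partition $V$, numerator and denominator sum over the same ground set, and a bound holding class by class transfers to the totals, giving $\sum_v a_v^p \leq \rho^p(p+1)\sum_v (a'_v)^p$. Taking the $p$th root of the resulting ratio of $p$-means cancels the factor $1/n$ and yields $M_p(\phi_{\opt})/M_p(\phi)\leq (\rho^p(p+1))^{1/p} = \rho(p+1)^{1/p}$, as claimed.

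I expect the only genuine subtlety to be the per-element inequality: one must verify that the relevant independent set, namely $C_i$ minus its already-removed elements, is truly available at the step $v_k$ is removed, so that the $1/\rho$ guarantee applies to a maximum of size at least $c_i-k+1$, and hence that $\rho$ enters multiplicatively at each step rather than accumulating over steps. Everything else reuses Lemmas~\ref{lem:approx} and~\ref{lem:simplesum} essentially verbatim, so no new estimate is required; the restriction $p\in{\mathbb R}^+$ is inherited from Lemma~\ref{lem:simplesum} and from the monotonicity used in those lemmas.
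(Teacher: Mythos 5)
Your proof is correct and takes essentially the same route as the paper's own argument (Appendix~\ref{app:rhoappx}): the paper likewise observes that the $k$th element of an optimal class $C_i$ colored by the approximate MaxIS algorithm lands in a class of size at least $(c_i-k+1)/\rho$, sums to obtain $\sum_{v\in C_i}(a'_v)^p\geq \rho^{-p}\sum_{j=1}^{c_i} j^p$, and concludes via lemma~\ref{lem:simplesum}. Your careful handling of the per-element step (availability of the residual independent set, ties within a removal step) is simply a more explicit rendering of the paper's ``by iterating this argument.''
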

\begin{corollary}
\label{cor:mecappx}
The minimum entropy coloring problem~\cite{CFJ05} is approximable in polynomial time within an additive error of $\log_2 (\Delta + 2) - 0.14226$ on graphs with maximum degree $\Delta$.
\end{corollary}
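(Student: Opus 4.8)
The plan is to obtain this statement as the $p\to 0$ specialization of Theorem~\ref{thm:rhoappx}, combined with a known approximation for maximum independent set on bounded-degree graphs, and then to translate the resulting multiplicative guarantee on the geometric mean into an additive guarantee on the entropy. First I would recall the identity derived in the introduction: for any coloring $\phi$, the entropy of the induced distribution equals $\log n - \log M_0(\{a_v : v\in V\})$. Hence minimizing entropy is exactly maximizing the geometric mean $M_0$, i.e.\ the $p\to 0$ case of maximum $p$-mean coloring. Since the MaxIS algorithm does not depend on $p$, the bound $\rho(p+1)^{\frac 1p}$ of Theorem~\ref{thm:rhoappx} holds for \emph{every} $p>0$ on the single partition it returns; letting $p\to 0$, using $\lim_{p\to 0}(p+1)^{\frac 1p}=e$ together with the continuity of the $p$-mean of a fixed partition in $p$, gives that MaxIS approximates $M_0$ within a factor $\rho e$, where $\rho$ is the approximation factor available for maximum independent set.

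Next I would convert this into an additive entropy error. If $M_0(\phi)\geq M_0(\phi_{\opt})/(\rho e)$ for the coloring $\phi$ produced by MaxIS, then subtracting $\log M_0$ from $\log n$ both reverses and shifts the inequality, so the entropy of $\phi$ is at most the optimal entropy plus $\log_2(\rho e)$ bits. Thus the additive error achieved is exactly $\log_2(\rho e)$, and it remains only to insert a concrete value of $\rho$.

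Finally I would plug in the bounded-degree bound. On graphs of maximum degree $\Delta$, the maximum independent set problem is approximable in polynomial time within $\rho=(\Delta+2)/3$ (Halld\'orsson and Radhakrishnan). Then $\log_2(\rho e)=\log_2(\Delta+2)+\log_2(e/3)=\log_2(\Delta+2)-0.14226$, since $\log_2 e - \log_2 3 \approx 1.4427-1.58496=-0.14226$, which is precisely the claimed bound.

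The main obstacle is twofold. The minor point is making the $p\to 0$ passage rigorous, as the geometric mean is a limit rather than a genuine instance with $p\in\mathbb R^+$; this is routine because MaxIS is parameter-free, so continuity of $M_p$ of a fixed partition suffices. The more delicate point is pinning down the exact maximum-independent-set approximation factor whose logarithm produces the constant $0.14226$: the arithmetic forces $\rho=(\Delta+2)/3$, so the whole statement hinges on invoking precisely that bounded-degree MIS result and nothing weaker.
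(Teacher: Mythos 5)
Your proposal is correct and follows essentially the same route as the paper: apply Theorem~\ref{thm:rhoappx} with the Halld\'orsson--Radhakrishnan bound $\rho=(\Delta+2)/3$ for greedy maximum independent set, pass to the limit $p\to 0$ using $\lim_{p\to 0}(p+1)^{\frac 1p}=e$, and translate the multiplicative guarantee on $M_0$ into the additive entropy error $\log_2(\rho e)=\log_2(\Delta+2)+\log_2 e-\log_2 3<\log_2(\Delta+2)-0.14226$ via the identity relating entropy and the geometric mean. The only detail the paper makes explicit that you left implicit is that the ratio $\rho$ remains valid at every iteration of MaxIS because removing color classes cannot increase the maximum degree; your treatment of the $p\to 0$ limit via continuity of $M_p$ for the fixed returned partition is, if anything, slightly more careful than the paper's.
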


In the max-min graph coloring problem, that is when $p\to -\infty$, we aim to maximize the size of the smallest color class. Using a recent polynomial algorithm from 
Kierstead and Kostochka to construct equitable $\Delta + 1$-colorings~\cite{KK08}, we have the following approximability result.
\begin{corollary}
The max-min graph coloring problem can be approximated in polynomial time within a factor $\left( 1+O\left(\frac 1n\right)\right)\frac{\Delta + 1}{\chi}$ on graphs of order $n$, maximum degree $\Delta$, and chromatic number $\chi$.
\end{corollary}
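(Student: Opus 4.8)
The plan is to drop the greedy/MaxIS algorithm in this extreme regime and instead call an equitable colouring routine directly, then sandwich the optimum between an averaging upper bound governed by $\chi$ and a balancedness lower bound governed by $\Delta$. First I would record what the objective becomes in the limit: as $p\to-\infty$ the $p$-mean colouring objective reduces to maximising $\min_{i:c_i\neq 0} c_i$, the size of the smallest colour class, so the task is to produce a proper colouring whose least class is as large as possible.

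For the upper bound on the optimum, observe that any proper colouring partitions $V$ into at least $\chi$ nonempty classes whose sizes sum to $n$; hence the smallest class has size at most the average $n/\chi$. Since class sizes are integers this gives $\opt\le\lfloor n/\chi\rfloor\le n/\chi$, and this holds without any assumption on the structure of the optimal colouring. For the lower bound on the value achieved by the algorithm, I would invoke the Hajnal--Szemer\'edi theorem: every graph of maximum degree $\Delta$ admits an equitable $(\Delta+1)$-colouring, and by Kierstead and Kostochka~\cite{KK08} such a colouring is computable in polynomial time. In an equitable colouring every class size lies in $\{\lfloor n/(\Delta+1)\rfloor,\lceil n/(\Delta+1)\rceil\}$, and since $\Delta\le n-1$ we have $n\ge\Delta+1$, so all $\Delta+1$ classes are nonempty and the smallest has size exactly $\lfloor n/(\Delta+1)\rfloor$.

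Dividing the two bounds then yields an approximation factor of at most
\[
\frac{\opt}{\lfloor n/(\Delta+1)\rfloor}\le\frac{n/\chi}{\lfloor n/(\Delta+1)\rfloor}=\frac{\Delta+1}{\chi}\cdot\frac{n/(\Delta+1)}{\lfloor n/(\Delta+1)\rfloor},
\]
and it remains only to argue that the trailing rounding factor is $1+O(1/n)$, which delivers the stated bound.

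The main obstacle, and the only delicate point, is controlling that rounding factor. I would bound it by $\frac{n/(\Delta+1)}{\lfloor n/(\Delta+1)\rfloor}\le 1+\frac{1}{\lfloor n/(\Delta+1)\rfloor}$, which is $1+O(1/n)$ once $\Delta$ is treated as a fixed parameter (sharpening to $1+O(\Delta/n)$ if one tracks the dependence on a growing degree). Two small checks accompany this: that the equitable colouring's smallest class is genuinely $\lfloor n/(\Delta+1)\rfloor$ and nonempty, which is exactly where $n\ge\Delta+1$ is used, and that the averaging argument correctly uses $m\ge\chi$ rather than the possibly larger number of colours actually employed by an optimal solution.
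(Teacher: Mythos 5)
Your proof is correct and is exactly the argument the paper intends: the corollary is stated without an explicit proof, relying only on the cited Kierstead--Kostochka polynomial-time equitable $(\Delta+1)$-colouring, whose smallest class has size $\lfloor n/(\Delta+1)\rfloor$, combined with the averaging bound $\opt \le n/\chi$ and the rounding estimate you give. Your parenthetical note that the rounding factor is really $1+O(\Delta/n)$ when $\Delta$ grows with $n$ (so the stated $1+O(1/n)$ implicitly treats $\Delta$ as bounded or hides a $\Delta$-dependent constant) is a fair and slightly more careful reading than the paper's own statement.
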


The maximum independent set problem is the special case of minimum $p$-mean coloring in which $p\to +\infty$. It is therefore 
not surprising that the general coloring problem is not well approximable for any positive value of $p$, as the following lemma shows.
\begin{lemma}
\label{lem:isred}
If the maximum independent problem set cannot be approximated in polynomial time within $n^{1-\epsilon}$ for some $\epsilon =\epsilon (n)$, then the 
maximum $p$-mean graph coloring problem with $p\in{\mathbb R}^+$ cannot be approximated in polynomial time within $n^{1-\left( 2+\frac 1p\right)\epsilon}$.
\end{lemma}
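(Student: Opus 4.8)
The plan is to reduce the maximum independent set problem to maximum $p$-mean graph coloring by the identity map: given a graph $G$ on $n$ vertices with independence number $\alpha=\alpha(G)$, I run an approximation algorithm for $p$-mean coloring on the very same graph $G$ and read off information about $\alpha$ from its value. The whole argument rests on sandwiching the optimal $p$-mean coloring value $M_p^*$ of $G$ between two simple functions of $\alpha$, and then pushing the assumed inapproximability gap for $\alpha$ through this sandwich.

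The two bounds are the heart of the proof. For the upper bound, every color class is an independent set, so each part size satisfies $c_i\le\alpha$; since $\sum_i c_i=n$, the power-mean identity~(\ref{eq:genmean}) gives $(M_p^*)^p=\frac1n\sum_i c_i^{\,p+1}\le\frac1n\,\alpha^{p}\sum_i c_i=\alpha^{p}$, hence $M_p^*\le\alpha$ for every $p>0$. For the lower bound I exhibit one explicit (not necessarily optimal) coloring: take a maximum independent set as a single color class of size $\alpha$ and give each of the remaining $n-\alpha$ vertices its own color. This is a proper coloring, and its value satisfies $\sum_i c_i^{\,p+1}\ge\alpha^{p+1}$, so $M_p^*\ge(\alpha^{p+1}/n)^{1/p}=\alpha\,(\alpha/n)^{1/p}$. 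Note the lower bound only uses the \emph{existence} of this coloring, so it is harmless that computing a maximum independent set is itself hard.

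Next I transfer the gap. The assumed hardness of approximating $\alpha$ within $n^{1-\epsilon}$ means it is $\NP$-hard to distinguish graphs with $\alpha\ge g_1$ from graphs with $\alpha\le g_0$, where $g_1/g_0\ge n^{1-\epsilon}$ and we may take $g_0\ge 1$, so that $g_1\ge n^{1-\epsilon}$. Applying the two bounds, a YES instance has $M_p^*\ge g_1(g_1/n)^{1/p}$ while a NO instance has $M_p^*\le g_0$, so the ratio of the two optimal values is
\begin{equation*}
\frac{g_1\,(g_1/n)^{1/p}}{g_0}=\Big(\frac{g_1}{n}\Big)^{1/p}\frac{g_1}{g_0}\ge n^{-\epsilon/p}\,n^{1-\epsilon}=n^{1-\left(1+\frac1p\right)\epsilon}\ge n^{1-\left(2+\frac1p\right)\epsilon},
\end{equation*}
using $g_1\ge n^{1-\epsilon}$ in the first inequality. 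Any polynomial-time algorithm approximating $p$-mean coloring within a factor strictly below this ratio would, by thresholding its output value at $g_0$, distinguish the two cases and hence solve an $\NP$-hard problem; since $n^{1-(2+\frac1p)\epsilon}$ lies strictly below the ratio, this yields the claimed inapproximability.

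The step I expect to be the main obstacle is controlling the factor $(g_1/n)^{1/p}$, i.e.\ guaranteeing that the YES instances of the independent-set hardness have independent sets that are large relative to $n$. This is exactly what $g_0\ge 1$ buys us ($g_1\ge n^{1-\epsilon}$), and it is the only place where the precise form of the assumed hardness enters; everything else is the two elementary bounds together with the convexity of $x\mapsto x^{p+1}$. I would also verify that the bounds survive uniformly for all $p\in\mathbb{R}^+$ (the $p$-th root is monotone, so both inequalities are preserved), and remark that the arithmetic in fact delivers the stronger exponent $1-(1+\frac1p)\epsilon$, so the stated bound $1-(2+\frac1p)\epsilon$ holds with room to spare.
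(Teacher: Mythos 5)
Your two sandwich bounds are correct, and they are essentially the paper's own inequalities in disguise: your lower bound $M_p^*\geq \alpha\,(\alpha/n)^{1/p}$ is the paper's ``optimal value at least $\left(\alpha^{p+1}\right)^{1/p}$'' step, and your upper bound via $c_i\leq\alpha$ parallels the paper's bound on the value in terms of the largest color class. The gap-transfer arithmetic is also right and even yields the stronger exponent $1-\left(1+\frac 1p\right)\epsilon$. The genuine gap is the step where you recast the hypothesis: ``the assumed hardness of approximating $\alpha$ within $n^{1-\epsilon}$ means it is $\NP$-hard to distinguish graphs with $\alpha\geq g_1$ from graphs with $\alpha\leq g_0$.'' That implication does not follow. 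The hypothesis says only that no polynomial-time algorithm achieves ratio $n^{1-\epsilon}$ for maximum independent set --- nothing about $\NP$-hardness, and for MIS ``approximate'' means actually \emph{outputting} an independent set within the ratio, which is how the paper uses it (its proof ends by exhibiting an independent set and deriving a contradiction). Search-inapproximability does not hand you a hard promise/gap problem: a distinguisher for all threshold pairs would only let you \emph{estimate} $\alpha$ by binary search, and estimating the independence number is not known to yield a large independent set; there is no approximate search-to-decision reduction you can invoke here at these parameters. So your value-thresholding argument refutes something strictly stronger than what the lemma assumes. (Had the lemma been stated in gap form --- and Khot's result used downstream does supply gap hardness --- your argument would be fine, and sharper; but it does not prove the lemma as stated.)

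The fix is exactly the paper's device, and it also explains why the constant is $2+\frac 1p$ rather than your $1+\frac 1p$: instead of thresholding the output \emph{value}, extract the largest color class, of size $h$, from the approximate coloring and return it as an independent set. Since every class has size at most $h$, the value $A$ of the coloring satisfies $A^p=\frac 1n\sum_i c_i^{p+1}\leq h^p$, i.e.\ $A\leq h$; combining with $A\geq M_p^*/n^{1-t\epsilon}\geq \alpha^{1+\frac 1p}n^{-\frac 1p}/n^{1-t\epsilon}$ and with the harmless assumption $\alpha\geq n^{1-\epsilon}$ (if $\alpha<n^{1-\epsilon}$, a single vertex is already an $n^{1-\epsilon}$-approximation, so hard instances may be assumed to have large $\alpha$ --- this is the paper's ``we can safely assume''), one gets $h\geq n^{\left(t-1-\frac 1p\right)\epsilon}$, and $t=2+\frac 1p$ gives $h\geq n^{\epsilon}$. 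Since trivially $\alpha\leq n$, this class is an $n^{1-\epsilon}$-approximate independent set, contradicting the hypothesis in its search form. The extra $\epsilon$ is the price of producing a witness via the crude bound $\alpha\leq n$; your argument avoids paying it precisely because it never produces an independent set --- which is also exactly why it cannot close the loop against the stated hypothesis.
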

\begin{proof}
If the maximum independent set cannot be approximated within $n^{1-\epsilon }$, then we can safely assume that this
holds for graphs having an independent set of size $\alpha \geq n^{1-\epsilon }$. In such a graph, we consider the 
coloring obtained with a $n^{1-t\epsilon }$-approximation algorithm for maximum $p$-mean coloring, for some constant $t$ to be fixed later. 

The optimal solution in this graph has value at least $\left( \alpha^{p+1}\right)^{\frac 1p}$. Thus the
value $A$ of the coloring satisfies
\begin{equation}
A\geq \frac{\left( \alpha^{p+1} \right)^{\frac 1p}}{n^{1-t\epsilon }}.
\end{equation}
We now consider the largest color class in this coloring, and denote its size by $h$. We then get the following upper bound on $A$:
\begin{equation}
A\leq \left( \frac nh h^{p+1}\right)^{\frac 1p} = h n^{\frac 1p}.
\end{equation}

Putting this together, we obtain
\begin{eqnarray}
h n^{\frac 1p} & \geq & \frac{\left( \alpha^{p+1} \right)^{\frac 1p}}{n^{1-t\epsilon }} 
\geq \frac{\left( n^{(1-\epsilon )(p+1)} \right)^{\frac 1p}}{n^{1-t\epsilon }} \\
h & \geq & n^{\left( t-1-\frac 1p\right) \epsilon }.
\end{eqnarray}
% 
% n h^p & \geq & \frac{ n^{(1-\epsilon )(p+1)} }{n^{p(1-t\epsilon )}} \\
Letting $t=2+\frac 1p$, we obtain an independent set of size at least $n^\epsilon $, which is a $n^{1-\epsilon }$-approximation for the maximum
independent set problem, a contradiction.
\qed\end{proof}

Applying this lemma and using a result from Khot~\cite{K01}, we obtain the following.
\begin{theorem}
The maximum $p$-mean graph coloring problem, for $p\in{\mathbb R}^+$, is not approximable in polynomial time within a factor 
$n^{1-O(1/(\log n)^{\gamma})}$ for some constant $\gamma$ unless $\NP\subseteq \ZPTIME (2^{(\log n)^{O(1)}})$.
\end{theorem}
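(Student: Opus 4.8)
The plan is to combine Lemma~\ref{lem:isred} with the known inapproximability of maximum independent set due to Khot~\cite{K01}, treating the whole argument as a substitution into the lemma. Khot's result, in its maximum clique form (which transfers to maximum independent set by complementation), states that there is a constant $\gamma$ such that maximum independent set cannot be approximated in polynomial time within a factor $n^{1-O(1/(\log n)^\gamma)}$ unless $\NP\subseteq\ZPTIME(2^{(\log n)^{O(1)}})$. My first step is to read off from this statement the function $\epsilon=\epsilon(n)$ to feed into Lemma~\ref{lem:isred}: I would set $\epsilon(n)=c/(\log n)^\gamma$ for the constant $c$ hidden in the $O(\cdot)$, so that the independent-set hardness factor is exactly $n^{1-\epsilon}$.

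Next I would invoke Lemma~\ref{lem:isred} with this choice of $\epsilon$. The lemma's hypothesis --- that the hardness holds already for graphs possessing an independent set of size $\alpha\geq n^{1-\epsilon}$ --- is precisely the regime guaranteed by Khot's construction, so the hypothesis is met without additional work. The conclusion then yields that the maximum $p$-mean graph coloring problem is not approximable within a factor $n^{1-(2+\frac 1p)\epsilon}$, under the same complexity assumption.

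The final step is to absorb the multiplicative constant into the asymptotic notation. Since $p$ is a fixed positive real, the quantity $2+\frac 1p$ is a constant, so $(2+\frac 1p)\epsilon=(2+\frac 1p)\cdot c/(\log n)^\gamma=O(1/(\log n)^\gamma)$ with the same exponent $\gamma$. Hence the inapproximability factor becomes $n^{1-O(1/(\log n)^\gamma)}$ under the hypothesis $\NP\subseteq\ZPTIME(2^{(\log n)^{O(1)}})$, which is exactly the claimed statement.

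Because the argument is essentially a plug-in, I do not expect a serious obstacle; the one point deserving care is the compatibility of the two statements. I must make sure that Khot's hardness is available in the precise form Lemma~\ref{lem:isred} requires --- i.e.\ for instances whose optimum independent set is as large as $n^{1-\epsilon}$ --- and that both the complexity hypothesis and the exponent $\gamma$ propagate unchanged through the reduction. Since Lemma~\ref{lem:isred} is a gap-preserving reduction that only rescales the exponent by the fixed constant $2+\frac 1p$, both the hypothesis and the $n^{1-O(1/(\log n)^\gamma)}$ form survive intact, completing the proof.
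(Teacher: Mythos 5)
Your proposal is correct and takes essentially the same route as the paper, which proves the theorem exactly by instantiating Lemma~\ref{lem:isred} with $\epsilon(n)=O(1/(\log n)^{\gamma})$ from Khot's hardness result for maximum independent set~\cite{K01} and absorbing the constant $2+\frac{1}{p}$ into the $O(\cdot)$ notation. Your explicit checks---that Khot's hard instances contain independent sets of size at least $n^{1-\epsilon}$, and that the $\NP\subseteq\ZPTIME(2^{(\log n)^{O(1)}})$ hypothesis survives the deterministic reduction---are precisely the compatibility points the paper leaves implicit.
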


A similar result for $p\to 0$ was proved by Cardinal et al.~\cite{CFJ05}. The special case $p=1$ was proved by Dessmark et al.~\cite{DJLLP07}.\\

We end our discussion of the graph coloring problems with the equivalent problem in the complement of the graph $G$, which we call the {\em maximum $p$-mean clique partition} problem. The Max-ECP problem corresponds to the special case $p=1$. Gijswijt, Jost, and Queyranne~\cite{GJQ07} provided a $O(n^3)$ dynamic programming algorithm for finding a partition of interval graphs in cliques that minimizes the sum of a value-polymatroidal cost. Unfortunately, our objective function do not fall in that class, since the equivalent minimization problem involves minimizing a concave decreasing cost function, and value-polymatroidal functions must be non-decreasing. However, the correctness of their dynamic programming solely relies on the fact that an optimal partition always contain a maximal clique. This is true in our case as well, at least for $p>0$, and is a consequence of lemma~\ref{lem:jungle}. Thus the algorithm can be applied and we get the following results.
\begin{theorem}
The maximum $p$-mean clique partition problem with $p\in{\mathbb R}^+$ can be solved in $O(n^3)$ time on interval graphs.
\end{theorem}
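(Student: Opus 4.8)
The plan is to reduce the problem to a purely combinatorial optimization over clique partitions and then reuse, essentially verbatim, the dynamic program of Gijswijt, Jost, and Queyranne~\cite{GJQ07}. First I would record the reduction: by equation~(\ref{eq:genmean}), for a fixed ground set of size $n$ the value $M_p(\{a_v\})$ is a strictly increasing function of $\sum_i c_i^{p+1}$, where the $c_i$ are the sizes of the cliques in the partition. Hence maximizing the $p$-mean is equivalent to maximizing $\Phi(\phi):=\sum_i c_i^{p+1}$ over all clique partitions $\phi$ of $G$, or equivalently to minimizing $\sum_i g(c_i)$ with the per-clique cost $g(c)=-c^{p+1}$. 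For $p\in\mathbb{R}^+$ this $g$ is decreasing and concave (since $c^{p+1}$ is convex when $p+1\geq 1$), which is exactly why it falls outside the value-polymatroidal framework of~\cite{GJQ07}, whose cost functions must be non-decreasing.

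The heart of the argument is the structural claim that, for $p>0$, some optimal clique partition contains a maximal clique, since this is the only property on which the correctness of the dynamic program of~\cite{GJQ07} depends. I would prove this by an exchange argument driven by Lemma~\ref{lem:jungle}. Starting from any optimal partition, let $C$ be a part of maximum size. If $C$ is not a maximal clique, there is a vertex $u\notin C$ adjacent to every vertex of $C$; let $C'\neq C$ be the part containing $u$, so $|C'|\leq |C|$. Moving $u$ from $C'$ to $C$ keeps both parts cliques ($C\cup\{u\}$ is a clique by the choice of $u$, and $C'\setminus\{u\}$ is a subclique), and it alters the multiset of part sizes only by replacing $|C|,|C'|$ with $|C|+1,|C'|-1$. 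Since mass is transferred onto the already-larger part, the new (descending, zero-padded) size sequence dominates the old one in the sense defined before Lemma~\ref{lem:jungle}, so that lemma guarantees $\Phi$ does not decrease. As $|C|$ strictly increases at each such step and is bounded by $n$, iterating terminates with $C$ maximal, yielding an optimal partition with the desired property.

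With additivity of $\Phi$ over parts and the maximal-clique property in hand, the dynamic program of~\cite{GJQ07} applies without modification: on an interval graph one guesses the maximal clique $K$ containing a canonical extremal interval, adds its contribution $|K|^{p+1}$, deletes $K$, and recurses on the remaining graph, which is again an interval graph. The same accounting of subproblems and of candidate maximal cliques as in~\cite{GJQ07} then yields the $O(n^3)$ running time.

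I expect the main obstacle to be making the exchange argument fully rigorous, rather than the algorithmic part. Specifically, I must verify that the single-vertex move both preserves feasibility and produces a dominating size sequence (a standard majorization/transfer fact, but one that must be stated carefully with zero-padding so that Lemma~\ref{lem:jungle} applies to sequences of equal length and equal total), and that the iteration genuinely reaches a maximal clique. Once the maximal-clique property is secured, invoking~\cite{GJQ07} is routine.
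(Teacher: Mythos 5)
Your proposal follows the paper's argument essentially verbatim: the paper likewise observes that the objective $\sum_i c_i^{p+1}$ falls outside the value-polymatroidal framework of Gijswijt, Jost, and Queyranne, notes that the correctness of their $O(n^3)$ dynamic program rests only on an optimal partition containing a maximal clique, and derives that property for $p>0$ from Lemma~\ref{lem:jungle}. Your explicit single-vertex exchange argument (which the paper leaves implicit in the phrase ``is a consequence of lemma~\ref{lem:jungle}'') is a correct and welcome elaboration of that same step.
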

\begin{corollary}
The Max-ECP problem~\cite{DJLLP07} can be solved in $O(n^3)$ time on interval graphs.
\end{corollary}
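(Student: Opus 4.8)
The plan is to observe that the Max-ECP problem is precisely the $p=1$ instance of the maximum $p$-mean clique partition problem, so that the preceding theorem applies essentially verbatim. First I would recall the identity established in the introduction: for any partition of $G$ into cliques with part sizes $c_1,\ldots,c_k$,
\begin{equation*}
\sum_{i=1}^k \binom{c_i}{2} = \frac{n}{2}\left( M_1(\{ a_v : v\in V\}) - 1\right),
\end{equation*}
where the left-hand side is exactly the number of edges whose endpoints lie in a common clique of the partition. Since the order $n$ of the input graph is fixed, the map $x\mapsto \frac{n}{2}(x-1)$ is a strictly increasing affine function; hence a clique partition maximizes the number of monochromatic edges if and only if it maximizes the arithmetic mean $M_1=M_p$ with $p=1$. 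In particular, the set of optimal Max-ECP solutions coincides with the set of optimal maximum $1$-mean clique partitions.

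Next I would simply invoke the preceding theorem at the parameter value $p=1$. Because $1\in{\mathbb R}^+$, that result provides an $O(n^3)$ dynamic programming algorithm computing an optimal maximum $1$-mean clique partition on any interval graph. By the equivalence above, the partition it returns is also optimal for Max-ECP, and recovering the edge count $\sum_i \binom{c_i}{2}$ from the part sizes costs only linear time, leaving the $O(n^3)$ bound unchanged.

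There is essentially no real obstacle here, as the statement is a direct specialization; the only point requiring care is checking that the affine reformulation preserves the set of maximizers, which is immediate from the positivity and constancy of the leading coefficient $n/2$. One may additionally note, to reassure the reader that the ambient theorem genuinely covers this case, that the correctness of the dynamic program rests on Lemma~\ref{lem:jungle} (an optimal partition always contains a maximal clique), which holds for every $p>0$ and in particular for $p=1$; no new structural property of Max-ECP needs to be established.
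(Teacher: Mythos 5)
Your proposal is correct and matches the paper's (implicit) proof exactly: the corollary is a direct specialization of the preceding theorem to $p=1$, justified by the identity $\sum_i \binom{c_i}{2} = \frac{n}{2}\left(M_1(\{a_v : v\in V\}) - 1\right)$ from the introduction, which shows Max-ECP and maximum $1$-mean clique partition have the same optimal solutions. Your additional remarks on the maximizer-preserving affine map and on Lemma~\ref{lem:jungle} are sound and consistent with the paper's reasoning.
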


\section{Further Generalizations}
\label{sec:gen}

\paragraph{Weighted variant.}
We first observe that theorems~\ref{thm:approx} and \ref{thm:inapprox} also hold for a weighted version of the minimum $p$-mean set cover problem. In this problem, the elements of $v$ have a weight $w(v)$. The objective function is the same, except that $a_v$ is now defined as $w(\phi^{-1}(\phi (v)))$. We can observe that the approximability proofs above still hold using a simple reduction for integer weights. Given a weighted instance, we can transform it into an unweighted instance by replacing each element $v\in V$ by $w(v)$ copies of it, each belonging to the same subsets as $v$. Then each copy of the duplicated elements must belong to the same part of the (greedy or optimal) solution. Otherwise, from lemma~\ref{lem:jungle}, some elements can be reassigned so that the $p$-mean increases. The argument extends to rational and, by continuity, real weights.

\paragraph{General costs.}
Following the definition of Fukunaga, Halld\'orsson, and Nagamochi~\cite{FHN08} for minimum cost colorings, we now consider a much more general family of set cover problems. In these problems, we aim to minimize a sum of some concave function $f(c_i)$ of the part sizes. The functions $f$ are concave in the sense that they are discrete restrictions of concave functions $f:{\mathbb R^+}\mapsto{\mathbb R}$. We also assume $f(0)=0$. Setting $f(c_i)=-c_i^{p+1}$, for instance, yields a problem similar to the maximum $p$-mean set cover problem, without the $1/p$ exponent. The definition of this new family is as follows. 

\begin{definition}[Set cover with general costs]
Given an $n$-element ground set $V$ and a collection $\bigs  =\{S_1,\dots,S_k\}$ 
of subsets of $V$ whose union is $V$, find a cover $\phi: V \mapsto \bigs $ that minimizes $\sum_{i=1}^k f(c_i)$,
where $c_i:=|\phi^{-1}(S_i)|$ and $f$ is a concave function.
\end{definition}

Concavity implies that we seek a distribution of the part sizes that is as unbalanced as possible. In particular, the following generalization of 
lemma~\ref{lem:jungle} holds. 
\begin{lemma}
\label{lem:genjungle}
Given two nonincreasing sequences $(c_i)$ and $(c'_i)$, such that $(c_i)$ dominates $(c'_i)$, and a concave function $f$, 
we have $\sum_{i=1}^k f(c_i) \leq \sum_{i=1}^k f(c'_i)$.
\end{lemma}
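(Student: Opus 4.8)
The plan is to mirror the proof of lemma~\ref{lem:jungle}, with the convex function $x^{p+1}$ replaced by the convex function $g:=-f$, and to carry out the comparison by summation by parts. Since $g=-f$ is convex with $g(0)=0$, the claim is equivalent to $\sum_{i=1}^k g(c_i)\ge\sum_{i=1}^k g(c'_i)$; this is precisely the Hardy--Littlewood--P\'olya inequality for sequences ordered by domination, of which lemma~\ref{lem:jungle} is the special case $g(x)=x^{p+1}$. I would give a self-contained argument rather than quote it, so that the present lemma and lemma~\ref{lem:jungle} share one proof.

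First I would record the identity
\[
\sum_{i=1}^k g(c_i)-\sum_{i=1}^k g(c'_i)=\sum_{i=1}^k \lambda_i\,(c_i-c'_i),
\]
where $\lambda_i$ denotes the slope of the chord of $g$ joining the abscissae $c_i$ and $c'_i$ (and, when $c_i=c'_i$, any subgradient of $g$ at $c_i$). The key structural step is then to prove that $(\lambda_i)$ is nonincreasing. For a convex $g$ the chord slope $m(a,b)=\bigl(g(b)-g(a)\bigr)/(b-a)$ is nondecreasing in each of its endpoints (the three--chord lemma); since the two sequences are sorted so that $c_i\ge c_{i+1}$ and $c'_i\ge c'_{i+1}$, passing from index $i$ to $i+1$ decreases \emph{both} endpoints, whence $\lambda_i\ge\lambda_{i+1}$. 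This is why no overlap analysis of the intervals is needed.

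With these slopes in hand I would set $A_j:=\sum_{i=1}^j (c_i-c'_i)$ and apply Abel summation to obtain
\[
\sum_{i=1}^k \lambda_i\,(c_i-c'_i)=\sum_{i=1}^{k-1}(\lambda_i-\lambda_{i+1})\,A_i+\lambda_k A_k.
\]
By the domination hypothesis every $A_i\ge 0$, and by the previous step every coefficient $\lambda_i-\lambda_{i+1}\ge 0$, so the first sum is nonnegative. It remains to control the boundary term $\lambda_k A_k$: in each of our instances the two covers are complete, hence $\sum_i c_i=\sum_i c'_i$ (appending zeros to equalize lengths and using $f(0)=0$), so $A_k=0$ and the boundary term vanishes.

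The step I expect to be delicate is exactly this boundary term, and the hypothesis that legitimizes discarding it. Bare domination (weak majorization) is not by itself enough for an arbitrary concave $f$: for a strictly increasing $f$ one also needs $\sum_i c_i=\sum_i c'_i$, which is why I would make the equal--total assumption explicit and trace it back to the fact that both part--size sequences sum to the total weight of $V$. The remaining fiddly points are the definition of $\lambda_i$ in the degenerate case $c_i=c'_i$, handled by choosing subgradients and invoking their monotonicity, and the verification that appending zeros preserves both the sorting and, via $f(0)=0$, the two sums $\sum_i f(c_i)$ and $\sum_i f(c'_i)$.
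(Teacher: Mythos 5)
Your proof is correct, and it is worth noting that the paper itself offers no proof of this lemma at all: Lemma~\ref{lem:genjungle} is stated bare, as a generalization of Lemma~\ref{lem:jungle}, which is itself only asserted to be ``a simple consequence of convexity.'' Your Abel-summation argument is exactly the standard Hardy--Littlewood--P\'olya proof that the paper leaves implicit, and its mechanics are sound: the identity $\sum_i g(c_i)-\sum_i g(c'_i)=\sum_i \lambda_i(c_i-c'_i)$, the monotonicity $\lambda_i\geq\lambda_{i+1}$ via the three-chord lemma (the chord slope being symmetric and nondecreasing in each endpoint, so indeed no analysis of how the intervals $[c'_i,c_i]$ overlap or cross is needed), the nonnegativity of $\sum_{i=1}^{k-1}(\lambda_i-\lambda_{i+1})A_i$ from domination, and the subgradient treatment of the degenerate case $c_i=c'_i$ (any subgradient at $x$ dominates chord slopes with both endpoints below $x$ and is dominated by those with both endpoints above, so the chosen $\lambda_i$ stay monotone).

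More importantly, you caught a genuine imprecision in the statement itself. With bare domination the lemma as written is false for increasing concave $f$: take $f(x)=\sqrt{x}$, $(c_i)=(4)$ and $(c'_i)=(1)$; then $(c_i)$ dominates $(c'_i)$, yet $f(4)=2>1=f(1)$. Lemma~\ref{lem:jungle} survives under bare domination only because $x^{p+1}$ is a \emph{nondecreasing} convex function (the Tomi\'c--Weyl refinement), whereas for a general concave $f$ one needs full majorization, i.e.\ the equal-sum condition $\sum_i c_i=\sum_i c'_i$ that you correctly isolate in the boundary term $\lambda_k A_k$. Your observation that this condition is automatic in every application the paper makes of the lemma --- both sequences are part sizes of covers of the same ground set, so both sum to $n$ (or to $w(V)$ in the weighted setting), after padding with zeros and using $f(0)=0$ --- is exactly the right repair, and it should arguably be added to the lemma's hypotheses.
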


Although the approximation ratio obtained with the greedy algorithm depends on the function $f$, we can give
a simple expression of it.
\begin{theorem}
\label{thm:genapprox}
The set cover problem with general costs can be approximated in polynomial time within a factor of 
$$\max \left\lbrace \frac{1}{f(c)}\sum_{j=1}^c \frac{f(j)}{j} : 1\leq c\leq \max_i |S_i|\right\rbrace .$$
\end{theorem}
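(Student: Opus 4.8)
The plan is to reduce the per-part cost $\sum_i f(c_i)$ to a per-element cost, exactly as the proof of Lemma~\ref{lem:approx} reduces $\sum_i c_i^{p+1}$ to $\sum_{v} a_v^p$. Define $g(c):=f(c)/c$. Since $f$ is concave with $f(0)=0$, for $0<a\le b$ we have $f(a)\ge \tfrac{a}{b}f(b)$, so $g$ is nonincreasing on the positive integers. Writing each part's cost as $f(c_i)=c_i\,g(c_i)=\sum_{v\in\phi^{-1}(S_i)} g(a_v)$, the objective becomes $\sum_{i} f(c_i)=\sum_{v\in V} g(a_v)$. The greedy algorithm is the usual one that iteratively selects a subset covering the largest number of still-uncovered elements; because $g$ is nonincreasing, selecting a largest coverage is the same as minimizing the cost $g(\cdot)$ paid per newly covered element, so the greedy rule is consistent with the objective.

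Next I fix an optimal cover $\phi_{\opt}$ and, for each of its parts $C_i=\phi_{\opt}^{-1}(S_i)$ of size $c_i$, I bound the greedy cost charged to the elements of $C_i$. The argument of Lemma~\ref{lem:approx} applies verbatim: ordering the elements of $C_i$ by the time greedy covers them, the $k$-th such element $v_k$ lies in a greedy part of size $a'_{v_k}\ge c_i-k+1$, since $C_i$ still offers at least $c_i-k+1$ uncovered elements when $v_k$ is covered and greedy selects a largest coverage. As $g$ is nonincreasing, $g(a'_{v_k})\le g(c_i-k+1)$, and summing over $k$ gives
\[
\sum_{v\in C_i} g(a'_v)\ \le\ \sum_{k=1}^{c_i} g(c_i-k+1)\ =\ \sum_{j=1}^{c_i}\frac{f(j)}{j}.
\]
Since the optimal cost charged to $C_i$ is exactly $\sum_{v\in C_i} g(a_v)=f(c_i)$, the ratio of greedy to optimal cost on $C_i$ is at most $\tfrac{1}{f(c_i)}\sum_{j=1}^{c_i} f(j)/j$.

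Finally, I sum these part-wise bounds over all parts of $\phi_{\opt}$, which partition $V$. Writing $R(c):=\tfrac{1}{f(c)}\sum_{j=1}^{c} f(j)/j$ and $\rho:=\max\{R(c):1\le c\le\max_i|S_i|\}$, each part satisfies $\sum_{v\in C_i} g(a'_v)\le R(c_i)\,f(c_i)\le \rho\,f(c_i)$, using $f(c_i)\ge 0$ and $c_i\le\max_i|S_i|$ (every part is contained in some $S_i$). Adding over $i$ yields $\greedy=\sum_v g(a'_v)\le\rho\sum_i f(c_i)=\rho\,\opt$, which is the claimed bound. The step requiring care is the per-element reaccounting together with the monotonicity of $g$: concavity enters only through the Lemma~\ref{lem:genjungle}-style reasoning that makes $g$ nonincreasing (so that the optimum favors unbalanced covers and the greedy rule is cost-consistent), and through the nonnegativity of $f$ needed to pass from the part-wise inequalities to the global ratio. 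The degenerate case $f(c_i)=0$ is excluded by assuming $f$ is positive on positive integers, which is in any case necessary for the ratio to be finite.
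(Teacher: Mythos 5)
Your proof is correct and takes essentially the same route as the paper's (sketched) argument: you reassign to each element the per-element cost $f(a_v)/a_v$, use concavity together with $f(0)=0$ to show this cost is nonincreasing in the part size, and then bound the greedy cost on each optimal part exactly as in the proof of Lemma~\ref{lem:approx}. Your explicit verification that $g(c)=f(c)/c$ is nonincreasing and your remark on the positivity of $f$ (needed for the ratio to be meaningful) simply fill in details the paper's sketch leaves implicit.
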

\begin{proof}(sketch)
Given a solution $\phi$, we associate to each element $v\in V$ the cost $\frac{f(a_v)}{a_v}$, where $a_v=|\phi^{-1}(\phi (v))|$ as before.
The cost of this solution is the sum $\sum_{v\in V} \frac{f(a_v)}{a_v}$. Using concavity, we can bound this sum in a greedy solution as in the proof of lemma~\ref{lem:approx}: we show that the sum over the elements in a part of size $c$ in the optimal solution is at most $\sum_{j=1}^c f(j)/j$. The ratio follows.
\qed\end{proof}
Note that we retrieve the approximation ratio $H_n$ of minimum set cover by setting $f(c)=1$ if $c>0$ and $f(0)=0$. This result also encompasses our analyses of the approximability of minimum entropy and maximum $p$-mean set cover.\\

We now give an application of this result to a new problem. In this problem, we suppose that the cost of assigning an element of $V$ to a subset $S_i$ is 1 if $S_i$ covers a lot of elements, but is proportional to its size if the fraction of elements covered by $S_i$ is small. More precisely, if the fraction $c_i/n$ of elements covered by $S_i$ is greater than some constant $\beta <1$, then the incurred cost is $\frac {c_i}n / \beta$. Otherwise, the cost is 1. Thus $\beta$ defines a breakpoint, above which it is less costly to ``buy" the subset than ``rent" it. Hence we define the {\em Rent-or-Buy} set cover problem as the set cover problem with the following cost function:
$$f(c) = 
\begin{cases} 
c/(\beta n) & \text{if } c \leq \beta n, \\
1 & \text{otherwise.} 
\end{cases}$$
This models situations in which for instance jobs are assigned to machines, and machines can be either bought or rented. The model was introduced recently by Fukunaga, Halld\'orsson, and Nagamochi as a graph coloring problem~\cite{FHN07b}. The original description of the Rent-or-Buy model was on a weighted graph, and the coloring problem was to find a coloring minimizing the sum of the values $\min \{1, w(C_i)\}$ over all color classes $C_i$, where $w(C_i)$ is the sum of the weights of the vertices in $C_i$. From our reduction of weighted instances described above, this is equivalent to our problem with
$\beta = \frac 1{w(V)}$.

\begin{corollary}
The Rent-or-Buy set cover problem is approximable in polynomial time within a factor of $1-\ln\beta$. 
\end{corollary}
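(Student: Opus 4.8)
The plan is to invoke Theorem~\ref{thm:genapprox} directly: it guarantees that the greedy algorithm approximates the set cover problem with general costs within the factor
$$\max \left\lbrace \frac{1}{f(c)}\sum_{j=1}^c \frac{f(j)}{j} : 1\leq c\leq \max_i |S_i|\right\rbrace ,$$
so the entire task reduces to evaluating this maximum for the Rent-or-Buy cost function $f$. I would first record the two regimes of the per-element cost $f(j)/j$: for $j\le\beta n$ we have $f(j)/j = 1/(\beta n)$, a constant ``rent'' weight, whereas for $j>\beta n$ we have $f(j)/j = 1/j$, the usual harmonic weight. The breakpoint $\beta n$ thus governs the whole computation.

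Next I would split the analysis of the inner expression according to whether the part size $c$ lies below or above $\beta n$. In the rent regime $c\le\beta n$, the sum $\sum_{j=1}^c f(j)/j$ equals $c/(\beta n)$, which is exactly $f(c)$, so the ratio is identically $1$ and this regime contributes nothing. In the buy regime $c>\beta n$, I would split the sum at the breakpoint: the contribution of indices $j\le\beta n$ is $\lfloor\beta n\rfloor/(\beta n)\le 1$, while the remaining tail $\sum_{\beta n< j\le c} 1/j$ is a harmonic tail that I bound from above by the integral $\int_{\beta n}^{c} dx/x = \ln(c/(\beta n))$.

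Finally I would observe that in the buy regime the inner expression is increasing in $c$, since passing from $c$ to $c+1$ adds the positive term $1/(c+1)$; hence it is maximized at $c=\max_i|S_i|\le n$. Substituting $c\le n$ yields a ratio of at most $1+\ln(1/\beta) = 1-\ln\beta$, which is the claimed bound. The only delicate point is the handling of the floor $\lfloor\beta n\rfloor$ together with the discrete-to-integral passage; these introduce only lower-order $O(1/n)$ corrections, so the bound $1-\ln\beta$ is attained asymptotically, consistently with the other approximation guarantees established in this paper. I therefore expect this boundary bookkeeping, rather than any conceptual difficulty, to be the only (and minor) obstacle.
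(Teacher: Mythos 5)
Your proof is correct and takes essentially the same route as the paper's: both invoke Theorem~\ref{thm:genapprox}, observe that the ratio is identically $1$ in the rent regime $c\le\beta n$, and bound the buy regime by $1+H_c-H_{\beta n}\le 1+H_n-H_{\beta n}\le 1-\ln\beta$ (your integral estimate $\ln(c/(\beta n))$ for the harmonic tail is the same bound in disguise). The integrality issue you flag is glossed over identically in the paper, which sets $t=\beta n$ and treats it as an integer, so your treatment matches its level of rigor.
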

\begin{proof}
We let $t=\beta n$. Let us first suppose that $c\leq t$. Then we have
\begin{equation}
\frac{1}{f(c)} \sum_{j=1}^c \frac{f(j)}{j} = \frac{t}{c} \sum_{j=1}^c \frac{1}{t}  = 1.
\end{equation}
Otherwise, if $c>t$, we have
\begin{equation}
\frac{1}{f(c)} \sum_{j=1}^c \frac{f(j)}{j}  =  1 + \sum_{j=t + 1}^c \frac{1}{j} 
 =  1 + H_c - H_t  \leq  1 + H_n - H_t  \leq  1 - \ln\beta .
\end{equation}
Hence from Theorem~\ref{thm:genapprox}, this is the worst-case approximation ratio achieved by the greedy algorithm.
\qed\end{proof}

Since the greedy algorithm can be implemented to run in polynomial time on perfect or claw-free graphs, we obtain the following result on the Rent-or-Buy graph coloring problem.
\begin{corollary}
The Rent-or-Buy coloring problem is approximable in polynomial time within a factor of $1+\ln w(V)$ on perfect or claw-free graphs.
\end{corollary}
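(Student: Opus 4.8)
The plan is to obtain this corollary by transporting the preceding Rent-or-Buy set cover result (approximation ratio $1-\ln\beta$) to the coloring setting, using the equivalence already established for weighted instances. Recall that in the Rent-or-Buy coloring problem on a vertex-weighted graph $G$ we minimize $\sum_i \min\{1, w(C_i)\}$ over the color classes $C_i$, and that this is precisely the set cover problem with the Rent-or-Buy cost function having breakpoint $\beta = 1/w(V)$, where the implicit subsets $S_i$ are the independent sets of $G$. The first step is therefore purely arithmetic: substituting $\beta = 1/w(V)$ into the ratio $1-\ln\beta$ gives $1 - \ln(1/w(V)) = 1 + \ln w(V)$, which is the claimed factor.

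The second step is to verify that the greedy algorithm attaining this ratio can be run in polynomial time on the graph classes in question. The relevant greedy step is the MaxIS step: repeatedly remove as a new color class a subset covering the maximum total weight of still-uncovered vertices, that is, a maximum-weight independent set of the remaining graph. Through the vertex-duplication reduction of the weighted variant, which replaces each $v$ by $w(v)$ copies lying in the same independent sets, such a maximum-weight independent set corresponds exactly to the independent set covering the largest number of copies, which is the greedy choice on the (conceptual) unweighted instance. Lemma~\ref{lem:genjungle} guarantees that neither an optimal nor a greedy solution ever separates copies of the same vertex, so the analysis behind Theorem~\ref{thm:genapprox} applies unchanged to the duplicated instance and yields the bound $1+\ln w(V)$.

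The main obstacle is securing polynomiality when $w(V)$ is super-polynomial, since the duplicated graph then has exponentially many vertices and must never be built explicitly. The resolution is to decouple the analysis from the implementation: the approximation guarantee is derived on the duplicated instance, but the algorithm is run directly on $G$, invoking a polynomial-time maximum-weight independent set subroutine at each iteration. Such subroutines exist on perfect graphs~\cite{GLS93} and on claw-free graphs~\cite{ND01}, and their running time is polynomial in $|V|$ and in the bit-size $\log w(V)$ of the weights, independent of their magnitude. Combining the ratio computation of the first step with this polynomial-time realization of greedy then completes the proof.
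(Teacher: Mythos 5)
Your proposal is correct and follows exactly the route the paper intends: the paper gives no explicit proof of this corollary, deriving it implicitly from the Rent-or-Buy set cover bound $1-\ln\beta$ with $\beta = 1/w(V)$ and the remark that greedy (the MaxIS algorithm) runs in polynomial time on perfect and claw-free graphs, which is precisely your argument. Your additional care in decoupling the analysis on the (conceptual) duplicated instance from the actual weighted implementation --- so that super-polynomial $w(V)$ causes no blowup --- is a detail the paper glosses over, and you handle it correctly.
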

This improves on the 2-approximation algorithm~\cite{FHN07b} when the overall weight $w(V)$ does not exceed $e$.

\bibliography{cost_coverings}
\bibliographystyle{plain}

\appendix

\section{Proof of theorem~\ref{thm:rhoappx} and corollary~\ref{cor:mecappx}}
\label{app:rhoappx}

The proof is similar to that of lemma~\ref{lem:approx}. We consider the approximate MaxIS algorithm in which a $\rho$-approximate maximum independent set is chosen at each step. We consider a class $C_i$ in an optimal coloring, of size $c_i$. The first vertex $v_1$ of $C_i$ that is colored by the approximate MaxIS algorithm will be assigned a value $a'_{v_1}$ at least $c_i/\rho$, since there exists an independent set of size $c_i$ in the current graph. By iterating this argument, we obtain that $\sum_{v\in C_i} (a'_v)^p \geq \frac{1}{\rho^p}\sum_{j=1}^{c_i} j^p$. In the optimal coloring, the value of this color class is $c_i^{p+1}$. Hence the ratio is at most
\begin{equation}
\left(\frac{n^{p+1}}{\frac{1}{\rho^p}\sum_{j=1}^n j^p}\right)^{\frac 1p} = \rho \left(\frac{n^{p+1}}{\sum_{j=1}^n j^p}\right)^{\frac 1p} .
\end{equation}
For positive values of $p$, combining with lemma~\ref{lem:simplesum} yields an approximation factor of $\rho (p+1)^{\frac 1p}$.\\

We now prove the corollary for the minimum entropy set cover problem. Using a greedy algorithm for the maximum independent set, we have $\rho = (\Delta + 2)/3$~\cite{HR97}. This ratio is valid for each step of the algorithm, as the maximum degree of the graph cannot increase. From (\ref{eqn:entgeom}), the error term for the minimum entropy problem is at most
$$
\lim_{p\to 0} \log_2 \left(\frac{\Delta + 2}{3}(p+1)^{\frac 1p}\right) = \log_2(\Delta + 2) + \log_2 (e) - \log_2 (3) < \log_2(\Delta + 2) - 0.14226.
$$

\end{document}